
\documentclass[preprint,12pt,authoryear]{elsarticle}




\usepackage{amssymb}
\usepackage{amsmath}
\usepackage{amsthm}



\usepackage{amsfonts}

\usepackage{mathtools}

\usepackage{hyperref}

\usepackage{float}

\usepackage[round,authoryear]{natbib}

\newtheorem*{theorem*}{Theorem}
\newtheorem{theorem}{Theorem}[section]

\newtheorem{proposition}[theorem]{Proposition}

\newtheorem{lemma}[theorem]{Lemma}

\newtheorem{definition}[theorem]{Definition}

\let\temp\varphi
\let\varphi\phi
\let\phi\temp

\newcommand{\N}{\mathbb{N}}
\newcommand{\R}{\mathbb{R}}
\newcommand{\Z}{\mathbb{Z}}
\newcommand{\Q}{\mathbb{Q}}


\DeclarePairedDelimiter{\iop}{(}{)}
\DeclarePairedDelimiter{\iob}{[}{]}
\DeclarePairedDelimiter{\ios}{\{}{\}}
\DeclarePairedDelimiter{\absolute}{|}{|}

\newcommand{\p}{\iop*}
\newcommand{\s}{\ios*}
\renewcommand{\b}{\iob*}
\newcommand{\abs}{\absolute*}

\DeclareMathOperator*{\supp}{supp}

\DeclareMathOperator{\NE}{NE}
\DeclareMathOperator{\PNE}{PNE}

\DeclareMathOperator*{\remn}{Remainder}


\journal{Games and Economic Behavior}

\begin{document}

\begin{frontmatter}




\title{Nash Equilibria with Irradical Probabilities\tnoteref{t1}} 
\tnotetext[t1]{Funding: This work was supported by DARPA E-BOSS HR001124C0486 and Hasso Plattner Institute.}



\author{Edan Orzech\corref{cor1}}\ead{iorzech@csail.mit.edu}
\author{Martin Rinard}\ead{rinard@csail.mit.edu}

\cortext[cor1]{Corresponding author}

\affiliation{organization={MIT CSAIL},
            addressline={32 Vassar Street}, 
            city={Cambridge},
            postcode={02139}, 
            state={Massachusetts},
            country={USA}}

\begin{abstract}
    We present for every $n\ge4$ an $n$-player game in normal form with payoffs in $\{ 0, 1, 2\}$ that has a unique, fully mixed, Nash equilibrium in which all the probability weights are irradical (i.e., algebraic but not closed form expressible even with $m$-th roots for any integer $m$).
\end{abstract}

\begin{keyword}
    Nash equilibrium \sep algebraic \sep irrational

    \MSC[2020] 91A06 \sep 91A10
\end{keyword}

\end{frontmatter}

\section{Introduction}
Every Nash equilibrium (NE) in a 2-player normal form game with rational payoffs is a solution to a 
linear program with rational weights \citep{roughgarden2010algorithmic}. Therefore, if 
$\sigma$ is a NE then there is a NE $\tau$ with rational probability weights (henceforth we 
refer to them as \emph{probabilities}) and support $\supp\sigma$. However, this conclusion 
fails to hold in $n$-player games when $n\ge3$. This is because more generally, NEs are 
solutions of polynomial equations and inequalities in $n$ variables and degree at most $n-1$,
and these might not have solutions in the rationals. A concrete example is found in 
\citet{bilo2014complexity}: a $3$-player game with integer payoffs in $\s{0,1,2,3}$, 
possessing a unique NE, and its probabilities are $((x_1,1-x_1),(x_2,1-x_2),(x_3,1-x_3))$ 
where
$x_1=\frac{25-\sqrt{409}}{12},~x_2=\frac{13+\sqrt{409}}{60},~x_3=\frac{21-\sqrt{409}}{2}$.  
These 6 probabilities are all irrational. A different known example is the 3-player poker game in 
\citet{nash1950simple}.

Such a result is possible for every $n\ge4$ as well but there seems to be no previously known game with this property for any such $n$.
In this paper we present for every $n\ge4$ an $n$-player game $G_n$ that has 2 actions per 
player and the following properties: (1) The payoffs are in $\s{0,1,2}$; (2) $G_n$ has a 
unique NE; (3) that NE is fully mixed, and (4) the probability weights of this NE are all 
inexpressible with radicals (henceforth we call such numbers \emph{irradicals}).  
Inexpressibility with radicals is an even stronger requirement than irrationality -- it 
means that these probabilities do not have closed form expressions using rational arithmetic and $m$-th roots for any integer $m$.

To play a mixed NE, a player must randomly select an action from the underlying (discrete) probability distribution.
Exact sampling from discrete distributions with rational probabilities is a well understood problem solved with efficient algorithms~\citep{knuth1976complexity,draper2025efficient}. There, the computations are done using integer arithmetic which be done precisely on a modern digital computer. But sampling from distributions with irrational (or irradical) probabilities using rational arithmetic is a less studied problem. Irrational numbers can only be approximated by finite integers. With current known techniques, exact sampling from a distribution with irrational weights can only be done with arbitrary precision arithmetic using a power series expansion of the probabilities~\citep{flajolet2011buffon,mendo2020simulating}, which is computationally and pragmatically more involved than sampling from distributions with rational probabilities.
In our context, all the probability weights are algebraic numbers, so the sampling problem appears to require computing roots of polynomials with potentially very large degrees and coefficients (the polynomials associated with $G_5$ have degree $26$ and $10$-digit coefficients).
This fact highlights a challenge that players face in this context.

For example, our $4$-player game has a unique NE, 
$((x_1,1-x_1),(x_2,1-x_2),(x_3,1-x_3),(x_4,1-x_4))$ such that $x_1,x_2,x_3,x_4$ are these 
roots of the polynomials
\begin{align*}
  P_1(z)&=7z^6-42z^5+89z^4-83z^3+40z^2-10z+1,\\
  P_2(z)&=4z^6-27z^5+70z^4-79z^3+45z^2-13z+1,\\
  P_3(z)&=140z^6-511z^5+701z^4-454z^3+141z^2-19z+1,\\
  P_4(z)&=5z^6-44z^5+143z^4-163z^3+85z^2-21z+2,
\end{align*}
that are approximately
$x_1\approx0.529,~x_2\approx0.846,~x_3\approx0.523,~x_4\approx0.320$. $x_1,x_2,x_3,x_4$ 
cannot be expressed with radicals because $P_1,P_2,P_3,P_4$ are all irreducible polynomials 
over the rationals whose Galois groups are all $S_6$ which is not solvable 
(see~\citet{artin2011algebra} for reference).

To obtain the main result we first construct the mentioned $4$-player game (Proposition~\ref{prop:n=4}) and 
a $5$-player game (Proposition~\ref{prop:n=5}) with a unique NE whose probabilities are 
irradical. We then use these two games to construct such a game for any $n\ge6$ 
(Proposition~\ref{prop:n>=6}). More specifically, for $n\ge6$ we juxtapose independent copies of 
the $4$-player and $5$-player games, and to one of the game copies possibly attach 2 
additional players who force each other to mimic the mixed strategy of one player in the game 
copy.

We complement this result by showing that every $2\times2\times2$ game with integer payoffs 
has a NE whose probabilities can be expressed using only rationals and square roots of 
rationals (Proposition~\ref{prop:n=3-sqrt}).
We also present a simple $3$-player game with integer payoffs in only $\s{0,1,2}$ whose 
unique NE has only irrational probabilities (Proposition~\ref{prop:n=3}). This game is 
simpler than that of \citet{bilo2014complexity}.
Its unique NE $((x_1,1-x_1),(x_2,1-x_2),(x_3,1-x_3))$ where
$x_1=\frac{7-\sqrt{13}}{6},~x_2=\frac{7+\sqrt{13}}{18},x_3=\frac{-3+\sqrt{13}}{2}$.
These are approximately $x_1\approx0.566,~x_2\approx0.589,~x_3\approx0.303$.

Note that given such a $3$-player game it is easy to extend it to an $n$-player game where 
\emph{some} of the players need to play mixed actions with irrational probabilities in 
every NE -- simply add players with one dominating action each. But then only the first 3 
players will need to play NE strategies with irrational probabilities. We are interested in 
forcing \emph{all} players to play NE strategies with irrational probabilities. For this 
reason we are interested only in fully mixed equilibria.

It is also possible to use such a $3$-player game to generate $n$-player games for every 
$n\ge5$ where all players need to play NE strategies with irrational probabilities. This is 
achieved by applying our results in Section~\ref{sec:ext} to such a $3$-player game.  
However, this approach alone leaves unresolved the case of 4 players, and all the probabilities in the other cases will not be irradical.

There are some related results in the literature. First is the characterization of 
\citet{kreps1981finite} of (mixed) strategy profiles which can be the unique NEs of a 
suitable normal form game. In particular, any fully mixed strategy profile for $n$ players 
with 2 actions each is the unique NE of some $n$-player game (with 2 actions per player).  
However, the payoffs in the constructed game will 
involve the probabilities in that strategy profile. In particular, if the probabilities are 
irrational then the payoffs found will be irrational. Here we are looking for games with 
integer payoffs, since these are simpler to store on a computer than irrational payoffs.

Another result is by \citet{bubelis1979equilibria} who constructs for every algebraic 
number $\alpha$ a $3$-player game with integer payoffs such that $\alpha$ is the payoff to 
some player in the NE of the game. However, in the constructed game player~$3$ has the fully mixed 
uniform distribution (which is comprised of rational probabilities) as an equilibrium 
strategy.

Another related result is the universality of NEs by \citet{datta2003universality}. This 
result asserts that every semialgebraic set is isomorphic to the set of NEs of some game.  
However, an isomorphism in this context only 
preserves the topological structure of the semialgebraic set rather than its exact members.  
In particular, in \citet{datta2003universality}, a semialgebraic set consisting of a single 
point with irrational entries will be mapped to a NE profile with rational probabilities.

\paragraph{Paper organization} In Section~\ref{sec:defs} we define the setup and provide 
algebra background. In Section~\ref{sec:3} we present our results for $3$-player, 
$2\times2\times2$ games. In Section~\ref{sec:4} and Section~\ref{sec:5} we present the 
$4$-player game and $5$-player game. In Section~\ref{sec:ext} we use the latter two games 
to construct the remaining $n$-player games for every $n\ge6$.

\section{Setup and lemmas}\label{sec:defs}

The games we define are in the usual normal form.
They will be presented as a 2-column table of rows in the format
\begin{center}
  [action tuple $|$ payoffs].
\end{center}
The action tuple and the payoffs in each row are ordered by the players $1,\ldots,n$.

In the games we present each player has 2 actions called $0$ and $1$. Each mixed action 
profile is defined by $x=(x_1,\ldots,x_n)$ where $x_i\in[0,1]$ is the probability player $i$ 
gives to playing action 0. The payoff function of player~$i$ is $u_i$. We then define for 
every $i$ the polynomial
\begin{align}\label{eq:polys}
  f_i(x_{-i})=u_i(x_1,\dots,x_{i-1},1,x_{i+1},\ldots,x_n)-u_i(x_1,\ldots,x_{i-1},0,x_{i+1},\ldots,x_n).
\end{align}
This is the expected payoff advantage action $0$ has over action $1$ for player $i$ given 
the mixed actions~$x_{-i}$ of the other players. We sometimes write $f_i(x)$ or $f_i$ for 
simplicity.

Given a game $G$ let $\NE(G)$ be the set of NEs of $G$ and $\PNE(G)$ the set of pure NEs of 
$G$. For reference we state the following lemma which can be proved by definition.

\begin{lemma}\label{lmm:ne-poly-cond}
  If $x\in\NE(G)$ then for every $i$,
  \begin{align}\label{eq:poly-cond}
    f_i(x_{-i})>0&~\Rightarrow~x_i=1,\nonumber\\
    f_i(x_{-i})<0&~\Rightarrow~x_i=0,\\
    0<x_i<1&~\Rightarrow~f_i(x_{-i})=0.\nonumber
  \end{align}
\end{lemma}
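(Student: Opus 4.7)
The plan is to exploit that the expected utility $u_i(x)$ is affine in $x_i$, so that the best-response correspondence is completely determined by the sign of $f_i(x_{-i})$. First I would write, using that the players randomize independently,
\[
  u_i(x) = x_i\, u_i(x_1,\ldots,x_{i-1},1,x_{i+1},\ldots,x_n) + (1-x_i)\, u_i(x_1,\ldots,x_{i-1},0,x_{i+1},\ldots,x_n),
\]
which rearranges, using the definition \eqref{eq:polys} of $f_i$, to
\[
  u_i(x) = u_i(x_1,\ldots,x_{i-1},0,x_{i+1},\ldots,x_n) + x_i\cdot f_i(x_{-i}).
\]
Thus, with $x_{-i}$ held fixed, $u_i$ is affine in $x_i$ with slope exactly $f_i(x_{-i})$.

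Next I would invoke the defining property of a Nash equilibrium: $x_i\in\argmax_{y\in[0,1]} u_i(y,x_{-i})$. By the affine structure just observed, this argmax equals $\{1\}$ when $f_i(x_{-i})>0$, equals $\{0\}$ when $f_i(x_{-i})<0$, and equals the entire interval $[0,1]$ when $f_i(x_{-i})=0$. The first two cases give the first two implications of the lemma verbatim, and taking the contrapositive of their union yields the third: if $0<x_i<1$, then neither $f_i(x_{-i})>0$ nor $f_i(x_{-i})<0$ is compatible with the best-response condition, so $f_i(x_{-i})=0$.

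I do not anticipate any real obstacle, which matches the authors' remark that the lemma holds essentially by definition. The only point that warrants attention is the sign convention: since $x_i$ denotes the probability of action $0$ (not of action $1$), one should check that $f_i$ as given in \eqref{eq:polys} really is the coefficient of $x_i$ in the affine expansion above, so that the three implications point in the correct directions.
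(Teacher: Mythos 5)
Your proof is correct and is exactly the definitional argument the paper has in mind (the paper states only that the lemma "can be proved by definition" and gives no further detail). The affine expansion $u_i(x)=u_i(x_1,\ldots,x_{i-1},0,x_{i+1},\ldots,x_n)+x_i f_i(x_{-i})$ is the right decomposition, your sign check is consistent with the convention that $x_i$ is the probability of action $0$, and the three implications follow from the best-response condition as you describe.
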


\subsection{Algebra background}
Here we cover the necessary algebra background for this paper (see 
\citet{artin2011algebra} for a detailed reference). Let $\Q[x]$ be the ring (set) of 
polynomials~in the variable $x$ with rational coefficients. $P\in\Q[x]$ is irreducible if it cannot be written as 
$P_1P_2$ for $P_1,P_2\in\Q[x]$ where $0<\deg P_1,\deg P_2<\deg P$.
A~number $a$ is called \emph{algebraic} if there is a $0\ne P\in\Q[x]$ such that $P(a)=0$. Every 
algebraic number has a \emph{minimal polynomial}: a unique, monic polynomial of minimum 
degree $P_a\in\Q[x]$ such that $P_a(a)=0$. For example, $P_{\sqrt2}=x^2-2$. Here we will not insist on monicity in our search of these polynomials.

One way to prove a number $a$ is irrational is by showing that $\deg P_a\ge2$. To verify 
that a polynomial at hand $P$ is $P_a$, it is sufficient to show that $P(a)=0$ and $P$ is irreducible in $\Q[x]$ (and that $P$ is monic).

An irrational algebraic number can often be represented in a compact way, using 
\emph{radicals} -- $m$-roots ($\sqrt[m]~$ for an integer $m$) of rationals. For example $\sqrt2$ represents the positive root of 
$x^2-2$. Another example is~$\sqrt2+3\sqrt[3]5$. Given a polynomial with a root $a$, we 
want to know when it is possible to construct such a representation for $a$. This 
will be relevant in Section~\ref{sec:4} and Section~\ref{sec:5}.

Let $P$ be an irreducible polynomial with degree $n$ and roots $a_1,\ldots,a_n$. The field 
$F=\Q[a_1,\ldots,a_n]$ is defined to be the smallest field that contains $\Q$ and 
$a_1,\ldots,a_n$.\footnote{$\Q[a_1,\ldots,a_n]$ is a field while $\Q[x]$ and $\Q[x_1,\ldots,x_n]$ 
are rings because $x_1,\ldots,x_n$ are symbolic variables with no defined inverses, while 
$a_1,\ldots,a_n$ are real numbers that do have their inverses $\frac{1}{a_1},\ldots,\frac{1}{a_n}$ in $\Q[a_1,\ldots,a_n]$.} For example, 
$\Q\b{\sqrt2}$ is the set of all numbers $a+b\sqrt2$ where $a,b\in\Q$. An 
\emph{automorphism of $F$ that extends $\Q$} is a function $f:F\to F$ that is a bijection, 
preserves the field operations (e.g., $f(a+b)=f(a)+f(b)$) and maps every rational to 
itself. It can be shown that $f$ is uniquely defined by its action on the roots 
$a_1,\ldots,a_n$. Furthermore, $f\p{\s{a_1,\ldots,a_n}}=\s{a_1,\ldots,a_n}$, since for every $i$ it 
holds that $P(f(a_i))=f(P(a_i))=f(0)=0$, meaning $f(a_i)$ is a root of $P$. The set $Aut(F)$ 
is called the \emph{Galois group} of $P$.

A fundamental result in algebra states that a number $a$ can be represented with radicals 
if and only if the Galois group of its minimal polynomial $P_a$ is \emph{solvable}. The 
formal definition of solvability does not matter here. The point is that to determine if 
$a$ can be written with radicals we only need to determine if the Galois group of $P_a$ is 
solvable (and wlog we can ignore monicity) A particular type of group that is \emph{unsolvable} is $S_n$ for $n\ge5$. $S_n$ 
is the set of all $n!$ permutations of $\s{1,\ldots,n}$ with the group operation of 
permutation composition.

Take two examples. The first is $\sqrt2$. Its minimal polynomial is $x^2-2$. Its Galois 
group is called~$\Z_2$, which is simply the set $\s{0,1}$ equipped with the XOR operation.  
This group is solvable, and indeed $\sqrt2$ is a radical representation of the positive 
root of $x^2-2$. The second example is any root of $P=x^5-x-1$. It can be shown that the 
Galois group of $P$ is $S_5$. It is unsolvable, and indeed the roots of $P$ cannot be 
represented with radicals. We call such numbers \emph{irradicals}.

Note that the pair $(x_i,1-x_i)$ satisfies $x_i\in\Q\iff1-x_i\in\Q$, and similarly for 
expressbility with radicals. Therefore in our proofs of irrationality and irradicality we 
only analyze $x_1,\ldots,x_n$.

\section{Three players}\label{sec:3}

Define the $2\times2\times2$ game $G_3$:
\begin{figure}[H]
\centering
\begin{align*}
  \left[\begin{array}{ccc|ccc}
    0&0&0&0&0&2\\
    0&0&1&1&0&0\\
    0&1&0&0&1&0\\
    0&1&1&0&0&1\\
    1&0&0&0&0&0\\
    1&0&1&0&1&1\\
    1&1&0&1&1&0\\
    1&1&1&1&0&1
  \end{array}\right]
\end{align*}
\caption{The game $G_3$}
\label{fig:G3}
\end{figure}

\begin{proposition}\label{prop:n=3}
  $G_3$ has a unique NE $((x_1,1-x_1),(x_2,1-x_2),(x_3,1-x_3))$ given by
  \begin{align*}
    x_1=\frac{7-\sqrt{13}}{6},~x_2=\frac{7+\sqrt{13}}{18},~x_3=\frac{-3+\sqrt{13}}{2}.
  \end{align*}
  The probabilities in the NE are all irrational.
\end{proposition}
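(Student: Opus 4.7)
The plan is to apply Lemma~\ref{lmm:ne-poly-cond} directly. First I would read the payoffs from the table and expand each $u_i$ as a multilinear polynomial in the mixing probabilities; subtracting according to~\eqref{eq:polys} should yield three sparse polynomials, in particular
\begin{align*}
  f_1(x_2,x_3) &= 2x_2 - x_2 x_3 - 1,\\
  f_2(x_1,x_3) &= 1 - x_1 - 2x_3 + x_1 x_3,\\
  f_3(x_1,x_2) &= 3x_1 x_2 - 1.
\end{align*}

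Next I would argue that every NE of $G_3$ is fully mixed. Lemma~\ref{lmm:ne-poly-cond}, together with its contrapositive ($x_i = 1 \Rightarrow f_i \ge 0$ and $x_i = 0 \Rightarrow f_i \le 0$), gives strong sign and equality constraints. I would enumerate the $26$ non-fully-mixed patterns, classified by which players are pure and with which pure action. The sparsity of the $f_i$'s makes each case a quick computation; for example, any pattern in which $x_3$ is mixed but $x_1$ is pure is killed by $f_3 = 0 \Rightarrow x_1 x_2 = 1/3$, which forces $x_1 = 1,\ x_2 = 1/3$, after which $f_2 = 0$ demands $x_3 = 0$, contradicting $x_3 \in (0,1)$. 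The other mixed-pure patterns and the eight pure profiles dispose of themselves analogously by checking signs of the $f_i$'s at the relevant points.

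For the surviving fully mixed case I would solve $f_1 = f_2 = f_3 = 0$ in closed form. The natural order is to use $f_1 = 0$ to express $x_2 = 1/(2-x_3)$, use $f_2 = 0$ to express $x_1 = (1-2x_3)/(1-x_3)$, and substitute into $f_3 = 0$. This should collapse to the quadratic $x_3^2 + 3x_3 - 1 = 0$, whose roots are $(-3 \pm \sqrt{13})/2$; only $x_3 = (-3+\sqrt{13})/2$ lies in the interval $(0, 1/2)$ forced by positivity of $x_1$. Back-substituting and rationalizing then produces the stated values $x_2 = (7+\sqrt{13})/18$ and $x_1 = (7-\sqrt{13})/6$. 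Irrationality comes for free: $13$ is not a perfect square, so $\sqrt{13} \notin \Q$, and each of $x_1, x_2, x_3$ is a rational number plus a nonzero rational multiple of $\sqrt{13}$, hence irrational.

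The only real risk in this proof is organizational: enumerating the 26 non-fully-mixed patterns without omission. This is mitigated by the sparsity of the $f_i$, and in particular by first exploiting $f_3 = 0$, which decides most cases at once (it is incompatible with $x_1 = 0$ or $x_2 = 0$, and forces very restrictive values when either is $1$).
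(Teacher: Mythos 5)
Your proposal is correct and follows essentially the same route as the paper: the polynomials $f_1,f_2,f_3$ match the paper's, the non-fully-mixed cases are eliminated via Lemma~\ref{lmm:ne-poly-cond} (the paper organizes this as a pure-profile deviation table plus six cases $x_i\in\{0,1\}$ rather than your 26 patterns, but the content is the same), and the elimination order you choose leads to the identical quadratic $x_3^2+3x_3-1=0$ and the same closed forms. The irrationality argument via $\sqrt{13}\notin\Q$ is also the paper's.
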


\begin{proof}
  First observe that $\PNE(G_3)=\emptyset$ using this table of profitable deviations:
  \begin{align*}
    \begin{array}{|c|c|}
      \hline
      \text{action profile}&\text{unsatisfied players}\\
      \hline
      (0,0,0)&2\\
      (0,0,1)&3\\
      (0,1,0)&1,3\\
      (0,1,1)&1\\
      (1,0,0)&2,3\\
      (1,0,1)&1\\
      (1,1,0)&3\\
      (1,1,1)&2\\
      \hline
    \end{array}
  \end{align*}
  
  Now the polynomials $f_1,f_2,f_3$ from Equation~\ref{eq:polys} are
  \begin{align*}
    f_1(x_2,x_3)&=-x_2x_3+2x_2-1,~f_2(x_1,x_3)=x_1x_3-x_1-2x_3+1,\\
    f_3(x_1,x_2)&=3x_1x_2-1.
  \end{align*}

  We show using Lemma \ref{lmm:ne-poly-cond} that if $x_i\in\s{0,1}$ for some $i$ then 
  $x\notin\NE(G_3)$. There are 6 cases.

  \paragraph{Case 1: $x_1=0$} then $f_2=-2x_3+1,~f_3=-1<0$, so $x_3=0$ so $f_2=1>0$ so 
  $x_2=1$. Therefore $x=(0,1,0)\notin\NE(G_3)$.
  
  \paragraph{Case 2: $x_1=1$} then $f_2=-x_3,~f_3=3x_2-1$.
  If $x_3>0$ then $f_2<0$ so $x_2=0$ so $f_3=-1<0$ so $x_3=0$ -- a contradiction. Therefore 
  $x_3=0$, so $f_3\le0$ so $x_2\le\frac{1}{3}$ so $f_1\le-\frac{1}{3}<0$ so $x_1=0$ -- a 
  contradiction.

  \paragraph{Case 3: $x_2=0$} then $f_1=-1$ so $x_1=0$ so by case 1 $x\notin\NE(G_3)$.

  \paragraph{Case 4: $x_2=1$} then $f_1=-x_3+1,~f_3=3x_1-1$. By cases 1 and 2, 
  $x_1\in(0,1)$ so $f_1=0$ so $x_3=1$ so $f_2=-1<0$ so $x_2=0$ -- a contradiction.  
  Therefore $x\notin\NE(G_3)$.

  \paragraph{Case 5: $x_3=0$} then $f_2=-x_1+1$. By cases 3 and 4, $x_2\in(0,1)$ so 
  $f_2=0$ so $-x_1+1=0$ so $x_1=1$ so by case 2 we conclude that $x\notin\NE(G_3)$.

  \paragraph{Case 6: $x_3=1$} then $f_1=x_2-1$. Again $0=f_1=x_2-1$ so $x_2=1$, so by case 
  4 we conclude that $x\notin\NE(G_3)$.

  \bigskip

  Therefore the only NEs are fully mixed. This means that $x\in\NE(G_3)$ satisfies 
  $f_i(x)=0$ for every $i$. By $f_3$ we have $x_1=\frac{1}{3x_2}$.
  Then by $f_2$ we have $\frac{x_3}{3x_2}-\frac{1}{3x_2}-2x_3+1=0$ so $(6x_3-3)x_2=x_3-1$.  
  $x_3\ne\frac{1}{2}$ because $x_3\ne1$ so $x_2=\frac{x_3-1}{6x_3-3}$.
  Then by $f_1$ we have $\frac{(x_3-1)(2-x_3)}{6x_3-3}=1$ so $-x_3^2+3x_3-2=6x_3-3$ so 
  $x_3^2+3x_3-1=0$ so because $x_3>0$ we get $x_3=\frac{-3+\sqrt{13}}{2}$, 
  $x_2=\frac{-5+\sqrt{13}}{-24+6\sqrt{13}}=\frac{7+\sqrt{13}}{18}$ and 
  $x_1=\frac{6}{7+\sqrt{13}}=\frac{7-\sqrt{13}}{6}$.

  $x_1,x_2,x_3\notin\Q$ because they are expressions involving only rationals and 
  $\sqrt{13}$ which is irrational. Their minimal polynomials in $\Q[y]$ are
  \begin{align*}
    P_1(y)=3y^2-7y+3,~P_2(y)=9y^2-7y+1,~P_3(y)=y^2+3y-1.
  \end{align*}
  These are all indeed irreducible by the rational root theorem.
\end{proof}

We conclude with a proof that square roots of rationals (along with rational arithmetic) 
are always enough to represent a NE of a $2\times2\times2$ game.

\begin{proposition}\label{prop:n=3-sqrt}
  Let $G$ be a $2\times2\times2$ game with integer payoffs. There is a NE 
  $((x_1,1-x_1),(x_2,1-x_2),(x_3,1-x_3))$ such that for every $i\in\s{1,2,3}$, 
  $x_i=a_i+b_i\sqrt c_i$ for some $a_i,b_i,c_i\in\Q$.
\end{proposition}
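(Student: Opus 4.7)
My plan is to do case analysis on the mixing pattern of a Nash equilibrium, whose existence is guaranteed by Nash's theorem. In each case I will exhibit a NE whose probabilities each lie in $\Q(\sqrt c)$ for some $c\in\Q$ (possibly different for each player). The key structural observation is that each $f_i$ is \emph{multilinear} in the other two players' probabilities, since $u_i$ is multilinear in all players' mixed actions.

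The near-pure cases are easy. If a pure NE exists, all $x_i\in\s{0,1}\subseteq\Q$, so we take $b_i=0$ and any $c_i$. If there is a NE with exactly one mixer (player $i$) and two pure players with actions $(p_j,p_k)\in\s{0,1}^2$, then $f_i(p_j,p_k)=0$ is a rational scalar condition (satisfied in this NE) and $x_i$ ranges freely over a rational interval keeping players $j,k$ best-responding, so we pick $x_i\in\Q$. If a NE has exactly two mixers (say players $1$ and $2$) with player $3$ pure at $p_3\in\s{0,1}$, then the multilinearity of $f_1,f_2$ reduces the indifference conditions $f_1(x_2,p_3)=0$ and $f_2(x_1,p_3)=0$ to linear equations in $x_2$ and $x_1$ with rational coefficients, yielding $x_1,x_2\in\Q$.

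The main case is a fully mixed NE, where $f_1(x_2,x_3)=f_2(x_1,x_3)=f_3(x_1,x_2)=0$. By multilinearity I write $f_1(x_2,x_3)=A_1(x_3)+x_2B_1(x_3)$ and $f_2(x_1,x_3)=A_2(x_3)+x_1B_2(x_3)$ with $A_j,B_j\in\Z[x_3]$ of degree at most $1$. Assuming $B_1(x_3),B_2(x_3)\ne0$, I solve for $x_2$ and $x_1$ as rational functions of $x_3$ and substitute into $f_3(x_1,x_2)=0$. Clearing denominators produces an equation $Q(x_3)=0$ with $Q\in\Z[x_3]$; each monomial of $Q$ is a product of at most two affine polynomials in $x_3$, so $\deg Q\le2$. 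The quadratic formula then gives $x_3=a_3+b_3\sqrt{c_3}$ for rationals $a_3,b_3,c_3$, and subsequently $x_1,x_2\in\Q(x_3)\subseteq\Q(\sqrt{c_3})$ are both in the required form.

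The main obstacle is the degenerate subcase where some denominator used in the elimination vanishes. If $B_1(x_3)=0$ at the NE but $B_1\not\equiv 0$ as a polynomial in $x_3$, then $x_3$ is the rational root of $B_1$ (and $A_1(x_3)=0$ for consistency), after which $f_2$ and $f_3$ become linear equations in $x_1$ and $x_2$ with rational coefficients and we recover $x_1,x_2\in\Q$. If $B_1\equiv 0$ identically, then $f_1=A_1(x_3)$, so $A_1(x_3)=0$ again fixes a rational $x_3$, reducing to the same conclusion; and if both $A_1\equiv B_1\equiv 0$, then $f_1$ is vacuous and a one-parameter family of NEs remains in which I can select any rational $x_1\in(0,1)$ satisfying the remaining linear equations, then use $f_2=0$ and $f_3=0$ to solve for rational $x_2,x_3$. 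The symmetric analysis handles the case where $B_2$ vanishes. In every subcase the probabilities end up in $\Q\subseteq\Q(\sqrt{c})$, completing the argument.
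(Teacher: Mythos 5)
Your proposal is correct and follows essentially the same route as the paper: case analysis on the support pattern of an existing NE, rational linear algebra in the pure and partially mixed cases, and in the fully mixed case elimination down to a single quadratic in one variable via multilinearity (the paper chains $f_1\to f_3\to f_2$ whereas you substitute the expressions for $x_1$ and $x_2$ from $f_2$ and $f_1$ into $f_3$, but both yield an equation of degree at most $2$), together with the same style of handling vanishing denominators by falling back to rational linear equations. The only place you are terser than the paper is in checking that the pure players' weak-inequality best-response conditions (e.g.\ $f_3\le 0$ when player $3$ is pure and the mixers' indifference conditions leave a free parameter) can still be met at the rational points you select, but this is the same routine density argument you already invoke in the degenerate fully mixed subcases.
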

\begin{proof}
  In full generality, the polynomials $f_1,f_2,f_3$ have the form
  \begin{align*}
    f_1(x_2,x_3)&=A_1x_2x_3+B_1x_2+C_1x_3+D_1,\\
    f_2(x_1,x_3)&=A_2x_1x_3+B_2x_1+C_2x_3+D_2,\\
    f_3(x_1,x_2)&=A_3x_1x_2+B_3x_1+C_3x_2+D_3,
  \end{align*}
  where the coefficients $A_i,B_i,C_i,D_i$ are all integers, as the payoffs in the game are 
  integers.

  Let $x\in\NE$.
  There are 3 cases up to permuting actions and payoffs:

  \paragraph{Case 1: $x_1=0$ and $x_2,x_3\in(0,1)$} Then 
  $f_1\le0,~f_2=C_2x_3+D_2=0,~f_3=C_3x_2+D_3=0$.

  From $f_1\le0$ we get $(A_1x_3+B_1)x_2\le-C_1x_3-D_1$. If $A_1x_3+B_1=0$ then either 
  $x_3=-\frac{B_1}{A_1}\in\Q$ or $A_1=B_1=0$, in which case $x_2,x_3$ are defined by linear 
  equations and inequalities with rational coefficients, so $x_2,x_3$ can be 
  rational in a NE.

  Otherwise, $x_2\le-\frac{C_1x_3+D_1}{A_1x_3+B_1}$.
  
  If $C_i\ne0$ for $i\in\s{2,3}$ then $x_{5-i}=-\frac{D_i}{C_i}\in\Q$. If $i=2$ then 
  $f_1=\p{-\frac{A_1D_2}{C_2}+B_1}x_2-\frac{C_1D_2}{C_2}+D_1\le0$. Since this inequality 
  holds for some value of $x_2$, the inequality also holds for some rational value of 
  $x_2$. In total, $x_1,x_2,x_3$ can all be rational (in a NE).

  Otherwise $C_2=C_3=0$. Then the only constraints on $x_2,x_3$ are being in $(0,1)$ and 
  the inequality $x_2\le-\frac{C_1x_3+D_1}{A_1x_3+B_1}$. This inequality is in particular 
  satisfied for some rational values of $x_2,x_3$, meaning that $x_1,x_2,x_3$ can all be 
  rational.
  
  \paragraph{Case 2: $x_1=x_2=0$ and $x_3\in(0,1)$} Then 
  $f_1=C_1x_3+D_1\le0,~f_2=C_2x_3+D_2\le0,~f_3=D_3=0$. So $x_3$ can be rational in a NE with $x_1=x_2=0$.

  \paragraph{Case 3: $x_1,x_2,x_3\in(0,1)$} Then $f_1=f_2=f_3=0$.
  Then $(A_1x_3+B_1)x_2=-C_1x_3-D_1$. Suppose that $(*)$ $A_1x_3+B_1\ne0$. Then 
  $x_2=-\frac{C_1x_3+D_1}{A_1x_3+B_1}$. From $f_3=0$ we then get 
  $\p{-\frac{A_3(C_1x_3+D_1)}{A_1x_3+B_1}+B_3}x_1-\frac{C_3(C_1x_3+D_1)}{A_1x_3+B_1}+D_3=0$.  
  So
  \begin{align*}
    \p{(B_3A_1-A_3C_1)x_3+B_3B_1-A_3D_1}x_1=(C_3C_1-D_3A_1)x_3+C_3D_1-D_3B_1.
  \end{align*}
  Suppose that $(**)$ $(B_3A_1-A_3C_1)x_3+B_3B_1-A_3D_1\ne0$. Then
  \begin{align*}
    x_1=\frac{(C_3C_1-D_3A_1)x_3+C_3D_1-D_3B_1}{(B_3A_1-A_3C_1)x_3+B_3B_1-A_3D_1}.
  \end{align*}
  
  Now $f_2=0$ simplifies into a quadratic equation in $x_3$ involving rational 
  coefficients.
  So $x_3$ can be written with rational arithmetic and (non-nested) square roots. The same then holds for $x_1,x_2$ by their expressions as ratios of linear 
  functions of $x_3$.

  Now suppose that $(*)$ and $(**)$ are not both true. There are 2 cases:
  \begin{itemize}
    \item $(*)$ is false, so $A_1x_3+B_1=0$. Then $C_1x_3+D_1=0$. If 
      $A_1=C_1=0$ then $B_1=D_1=0$ and $f_1$ is always $0$. If $f_2$ is always~$0$ then 
      $x_3$ can take any value in $(0,1)$ (particularly rational ones), and one can verify 
      that $x_1,x_2$ can take rational values. So $f_2,f_3$ are not always $0$, and because 
      of symmetry we can swap $f_1$ and $f_2$ to have one of $A_1\ne0$ and $C_1\ne0$ being 
      true which implies that $x_3\in\Q$. It can then be verified that $x_1,x_2$ can both 
      be rational.

    \item $f_1,f_3$ are not always $0$ and $A_1x_3+B_1\ne0$, $(*)$ is true and $(**)$ is false, so $(B_3A_1-A_3C_1)x_3+B_3B_1-A_3D_1=0$. Then 
      $(C_3C_1-D_3A_1)x_3+C_3D_1-D_3B_1=0$. If $B_3A_1-A_3C_1=C_3C_1-D_3A_1=0$ then 
      $C_3D_1-D_3B_1=B_3B_1-A_3D_1=0$ and conditioned on $f_1(x_2,x_3)=0$, we get that 
      $f_3$ is always $0$ in contradiction to our assumption.

      So one of $B_3A_1-A_3C_1$ and $C_3C_1-D_3A_1$ is not $0$. Then $x_3\in\Q$. It can 
      then be verified that $x_1,x_2$ can both be rational. \qedhere
  \end{itemize}
\end{proof}

\section{Four players}\label{sec:4}

Define the $2\times2\times2\times2$ game $G_4$:
\begin{figure}[H]
\centering
\begin{align*}
  \left[\begin{array}{cccc|cccc}
    0&0&0&0&0&0&0&1\\
    0&0&0&1&1&0&0&0\\
    0&0&1&0&0&0&1&0\\
    0&0&1&1&0&1&0&0\\
    0&1&0&0&0&0&0&2\\
    0&1&0&1&1&1&0&0\\
    0&1&1&0&1&0&1&0\\
    0&1&1&1&1&0&0&1\\
    1&0&0&0&0&0&0&0\\
    1&0&0&1&0&1&1&1\\
    1&0&1&0&1&0&1&1\\
    1&0&1&1&1&1&0&1\\
    1&1&0&0&1&1&0&0\\
    1&1&0&1&0&0&1&1\\
    1&1&1&0&0&1&1&0\\
    1&1&1&1&0&1&0&1
  \end{array}\right]
\end{align*}
\caption{The game $G_4$}
\label{fig:G4}
\end{figure}

\begin{proposition}\label{prop:n=4}
  $G_4$ has a unique NE $((x_1,1-x_1),(x_2,1-x_2),(x_3,1-x_3),(x_4,1-x_4))$ given by
  \begin{align*}
    x_1\approx0.529270752820,&~x_2\approx0.846414728986,\\
    x_3\approx0.523440476515,&~x_4\approx0.320065197645.
  \end{align*}
  The probabilities in the NE are all irrational and irradical.
\end{proposition}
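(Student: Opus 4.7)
The plan is to show that $G_4$ has no pure Nash equilibrium and no partially mixed one, so every NE is fully mixed; to solve the resulting polynomial system and thereby identify the stated $P_1,P_2,P_3,P_4$; and finally to prove irradicality by a Galois-group computation.

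First I would derive the four polynomials $f_1,\ldots,f_4$ from Figure~\ref{fig:G4} via~\eqref{eq:polys}; each is multilinear in the other three variables with small integer coefficients. Then, mirroring the proof of Proposition~\ref{prop:n=3}, I would tabulate for each of the sixteen pure action profiles a player with a strictly profitable deviation, yielding $\PNE(G_4) = \emptyset$, and perform a case analysis indexed by which variable first takes a boundary value in $\{0,1\}$ to rule out partially mixed equilibria. Each case chains through the implications of Lemma~\ref{lmm:ne-poly-cond} until contradicting an earlier boundary assignment.

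For a fully mixed NE we must have $f_1 = f_2 = f_3 = f_4 = 0$ simultaneously. Because each $f_i$ is linear in any single $x_j$ with $j \ne i$, the plan is to solve the system by successive elimination --- expressing one variable at a time as a rational function of the others and substituting --- ultimately arriving at a univariate polynomial of degree six in any chosen $x_i$, together with rational expressions for the remaining variables. After clearing denominators and normalizing, this should reproduce the stated $P_1,P_2,P_3,P_4$. A numerical root isolation then verifies that exactly one root of (say) $P_4$ in $(0,1)$ lifts to a tuple in $(0,1)^4$, establishing uniqueness of the fully mixed NE.

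The main obstacle is irradicality. For each $i$ I would first show that $P_i$ is irreducible over $\Q$: the rational root theorem excludes linear factors, and the remaining factorization shapes $(2,4)$ and $(3,3)$ can be ruled out either by a bounded integer-coefficient search or, more cleanly, by exhibiting a prime whose mod-$p$ factorization pattern of $P_i$ is incompatible with any such rational factorization. Irreducibility makes $\mathrm{Gal}(P_i)$ a transitive subgroup of $S_6$. To pin it down to all of $S_6$, I would invoke Dedekind's theorem (the cycle type of Frobenius at an unramified prime $p$ matches the degree sequence of the factorization of $P_i \bmod p$) and hunt for two primes: one where $P_i$ remains irreducible mod $p$, contributing a $6$-cycle to $\mathrm{Gal}(P_i)$, and one where $P_i \bmod p$ factors as an irreducible quadratic times four distinct linears, contributing a transposition. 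The only transitive subgroup of $S_6$ containing both a $6$-cycle and a transposition is $S_6$ itself (a $6$-cycle is odd, ruling out $A_6$; and the remaining transitive subgroups of $S_6$ containing $6$-cycles, such as $\Z_6$ and $D_{12}$, contain no transpositions). This search for suitable primes --- repeated in parallel for each of $P_1,\ldots,P_4$ with potentially different prime choices --- is the computationally delicate step. Once $\mathrm{Gal}(P_i) = S_6$ is established for each $i$, the non-solvability of $S_6$ together with the criterion recalled in Section~\ref{sec:defs} yields that $x_1,x_2,x_3,x_4$ are all irradical and in particular irrational, completing the proof.
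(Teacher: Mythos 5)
Most of your plan tracks the paper's proof closely: the paper also rules out pure and partially mixed equilibria by a deviation table plus an eight-case boundary analysis, solves $f_1=\cdots=f_4=0$ by elimination (it uses lexicographic Gr\"obner bases, which is your ``successive elimination'' made systematic and avoids spurious division-by-zero branches), discards the second real root of the degree-$6$ eliminant because it forces $x_3>1$, and proves irreducibility of each $P_i$ (by Murty's prime-value criterion rather than your mod-$p$ factorization patterns --- both are fine). The one place where your argument has a genuine mathematical gap is the final Galois-group step.

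Your claim that ``the only transitive subgroup of $S_6$ containing both a $6$-cycle and a transposition is $S_6$ itself'' is false. Because $6$ is composite, $S_6$ has imprimitive transitive subgroups beyond $C_6$ and $D_{12}$ that contain both: for example, with blocks $\{1,4\},\{2,5\},\{3,6\}$, the group $\langle(1\,2\,3\,4\,5\,6),(1\,4)\rangle$ is transitive, contains a $6$-cycle and a transposition, and sits inside the imprimitive wreath product $S_2\wr S_3$ of order $48$ (it has order $24$); likewise $S_3\wr S_2$ of order $72$ contains $6$-cycles and transpositions. So exhibiting one prime of Frobenius type $(6)$ and one of type $(2,1,1,1,1)$ does not pin down $\mathrm{Gal}(P_i)=S_6$. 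The standard repair is to replace (or supplement) the $6$-cycle evidence with a prime where $P_i$ factors mod $p$ as an irreducible quintic times a linear: a transitive group containing a $5$-cycle with a fixed point is $2$-transitive, hence primitive, and a primitive group containing a transposition is all of $S_6$ by Jordan's theorem. (The paper sidesteps this entirely by verifying $\mathrm{Gal}(P_i)=S_6$ with Magma, which internally uses resolvents rather than your two-prime heuristic.) With that correction your argument goes through; as written, it does not exclude the imprimitive subgroups.
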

\begin{proof}
  First observe that $\PNE(G_4)=\emptyset$ using this table of profitable deviations:
  \begin{align*}
    \begin{array}{|c|c|}
      \hline
      \text{action profile}&\text{unsatisfied players}\\
      \hline
      (0,0,0,0)&3\\
      (0,0,0,1)&2\\
      (0,0,1,0)&1\\
      (0,0,1,1)&1\\
      (0,1,0,0)&1,3\\
      (0,1,0,1)&4\\
      (0,1,1,0)&4\\
      (0,1,1,1)&2\\
      (1,0,0,0)&2,3,4\\
      (1,0,0,1)&1\\
      (1,0,1,0)&2\\
      (1,0,1,1)&3\\
      (1,1,0,0)&3,4\\
      (1,1,0,1)&1,2\\
      (1,1,1,0)&1,4\\
      (1,1,1,1)&1,3\\
      \hline
    \end{array}
  \end{align*}
  
  Now the polynomials $f_1,f_2,f_3,f_4$ from Equation~\ref{eq:polys} are
  \begin{align*}
    f_1(x_2,x_3,x_4)&=x_2x_3x_4+2x_2x_3-2x_2-2x_3x_4+1,\\
    f_2(x_1,x_3,x_4)&=3x_1x_3x_4-3x_1x_3+x_1-x_3x_4+x_3-x_4,\\
    f_3(x_1,x_2,x_4)&=x_1x_4-x_1-2x_4+1,\\
    f_4(x_1,x_2,x_3)&=-x_1x_2x_3+3x_1x_3-x_2x_3+x_2-1.
  \end{align*}

  We show using Lemma~\ref{lmm:ne-poly-cond} that if $x_i\in\s{0,1}$ for some $i$ then 
  $x\notin\NE(G_4)$. There are 8 cases.

  \paragraph{Case 1: $x_1=0$} then $f_2=-x_3x_4+x_3-x_4,~f_3=-2x_4+1,~f_4=-x_2x_3+x_2-1$.
  If $x_4=\frac{1}{2}$ then $f_2=\frac{x_3-1}{2}$. If $x_3=1$ then $f_4=-1<0$ so $x_4=0$ -- 
  a contradiction. Therefore $x_3<1$ so $f_2<0$ so $x_2=0$. Since $x_4=\frac{1}{2}$ we get
  $0=f_4=-1$ -- a contradiction.

  Therefore $x_4\ne\frac{1}{2}$. If $x_4<\frac{1}{2}$ then $f_3>0$ so $x_3=1$ so $f_4=-1$ 
  so $x_4=0$ so $f_2=1$ so $x_2=1$. Therefore $x=(0,1,1,0)\notin\PNE(G_4)$. If 
  $x_4>\frac{1}{2}$ then $f_3<0$ so $x_3=0$ so $f_2=-x_4<0$ so $x_2=0$ so $f_4=-1$ so 
  $x_4=0$ -- a contradiction.

  Overall $x_4$ cannot take any value in $[0,1]$, so $x\notin\NE(G_4)$.

  \paragraph{Case 2: $x_1=1$} then 
  $f_2=2x_3x_4-2x_3-x_4+1,~f_3=-x_4,~f_4=-2x_2x_3+x_2+3x_3-1$.
  If $x_4>0$ then $f_3<0$ so $x_3=0$ so $f_2=-x_4+1$ and $f_4=x_2-1$. If $x_4<1$ then 
  $f_2>0$ so $x_2=1$ so $f_1=-1$ so $x_1=0$ -- a contradiction. Therefore, still assuming 
  that $x_4>0$ we get $x_4=1$ so $f_4=x_2-1\ge0$ so $x_2=1$ so $f_1=-1$ so $x_1=0$ -- a 
  contradiction.

  Therefore $x_4=0$. So $f_2=-2x_3+1$. If $x_3=\frac{1}{2}$ then $f_4=\frac{1}{2}$ so 
  $x_4=1$ -- a contradiction. Therefore $x_3\ne\frac{1}{2}$. If $x_3>\frac{1}{2}$ then 
  $f_2<0$ so $x_2=0$ so $f_4=3x_3-1>0$ so $x_4=1$ -- a contradiction. Therefore 
  $x_3<\frac{1}{2}$. Then $f_2>0$ so $x_2=1$ so $f_1=2x_3-1<0$ so $x_1=0$ -- a 
  contradiction.

  Overall $x_1\ne1$.

  \paragraph{Case 3: $x_2=0$} then $f_1=-2x_3x_4+1,~f_4=3x_1x_3-1$. By cases 1 and 2, 
  $x_1\in(0,1)$ so $f_1=0$ so $x_3x_4=\frac{1}{2}$ so $x_3,x_4\ne0$ so $f_3,f_4\ge0$.  
  Therefore $x_3=\frac{1}{2x_4}$ so $f_4=\frac{3x_1}{2x_4}-1\ge0$ so 
  $x_1\ge\frac{2x_4}{3}$. Also, $f_3=x_1x_4-x_1-2x_4+1\ge0$ so $(1-x_4)x_1\le1-2x_4$. If 
  $x_4=1$ then $f_3=-1<0$ -- a~contradiction. So $x_4<1$, so
  $\frac{2}{3}x_4\le x_1\le\frac{1-2x_4}{1-x_4}$. The quadratic inequality is then 
  $2x_4^2-8x_4+3\ge0$, and in particular $x_4<\frac{1}{2}$ so $x_3>1$ -- a contradiction to 
  $x_3\le1$.

  Overall $x_4$ cannot take on any value, concluding this case.

  \paragraph{Case 4: $x_2=1$} then $f_1=-x_3x_4+2x_3-1,~f_4=2x_1x_3-x_3$. Again $f_1=0$ so 
  $x_3=\frac{1}{2-x_4}$. If $x_4=1$ then $x_3=1$ and $f_3=-1$ so $x_3=0$ -- a 
  contradiction. Therefore $x_4<1$ so $x_3<1$ so $f_3,f_4\le0$.
  Also $x_3>0$ and by $f_4\le0$ we get $x_1\le\frac{1}{2}$.
  By $f_3\le0$ we get $x_1\ge\frac{1-2x_4}{1-x_4}$, so in total $1-x_4\ge2-4x_4$ so 
  $x_4\ge\frac{1}{3}$. Therefore $x_3\ge\frac{3}{5}$. This means that $f_3=f_4=0$, so 
  $x_1=\frac{1}{2}$ so $f_3=-\frac{3}{2}x_4+\frac{1}{2}=0$ so $x_4=\frac{1}{3}$ so 
  $x_3=\frac{3}{5}$. Therefore $f_2=-\frac{1}{30}$ so $x_2=0$ -- a contradiction to 
  $x_2=1$. This establishes case 4. 

  \paragraph{Case 5: $x_3=0$} then $f_1=-2x_2+1,~f_2=x_1-x_4,~f_4=x_2-1$. By the previous 
  cases $x_1,x_2\in(0,1)$ so $f_1=f_2=0$ so $x_2=\frac{1}{2}$ and $x_1=x_4$. We also get 
  $f_4<0$ so $x_4=0$ so $x_1=0$ -- a contradiction.

  \paragraph{Case 6: $x_3=1$} then 
  $f_1=x_2x_4-2x_4+1,~f_2=3x_1x_4-2x_1-2x_4+1,~f_4=-x_1x_2+3x_1-1$.
  Again $f_1=f_2=0$ so $x_4=\frac{1}{2-x_2}$ so 
  $f_2=\frac{3x_1}{2-x_2}-2x_1-\frac{2}{2-x_2}+1=0$ so 
  $\frac{2x_2-1}{2-x_2}x_1=\frac{x_2}{2-x_2}$ so $(2x_1-1)x_2=1$ so $x_2>1$ because 
  $x_1\in(0,1)$ -- a contradiction to $x_2<1$.

  \paragraph{Case 7: $x_4=0$} then $f_3=-x_1+1$. By the previous cases, 
  $x_1,x_2,x_3\in(0,1)$ so $f_3=0$ so $x_1=1$ -- a contradiction.

  \paragraph{Case 8: $x_4=1$} then $f_2=x_1-1$. Like in case 7 we get $x_1\in(0,1)$ and 
  $x_1=1$ -- a contradiction.
  
  \bigskip

  Therefore the only NEs are fully mixed. This means that $x\in\NE(G_4)$ satisfies 
  $f_i(x)=0$ for every $i$. We solve this now.

  Let $F=\s{f_1,f_2,f_3,f_4}$. It is a set of polynomials in the polynomial ring 
  $\Q[x_1,x_2,x_3,x_4]$. We begin by finding the Gr\"obner basis $G=\s{g_1,g_2,g_3,g_4}$ of 
  the ideal $\langle F\rangle$, with the monomial ordering generated lexicographically by 
  $x_1\succ x_2\succ x_3\succ x_4$, given by
  \begin{align*}
    g_1&=5x_4^6-44x_4^5+143x_4^4-163x_4^3+85x_4^2-21x_4+2,\\
    g_2&=28x_3-1465x_4^5+12302x_4^4-36947x_4^3+32897x_4^2-11699x_4+1447,\\
    g_3&=4x_2+255x_4^5-2094x_4^4+6053x_4^3-4687x_4^2+1401x_4-149,\\
    g_4&=7x_1+5x_4^5-39x_4^4+104x_4^3-59x_4^2+26x_4-9.
  \end{align*}
  These were found using Mathematica~\citep{Mathematica}.

  Now we compute the solutions of $\forall i~g_i=0$ to find those of $\forall i~f_i=0$.  
  First, we approximate the 6 roots of $g_1$ numerically:
  \begin{align*}
    r_1&\approx0.3200651976,~r_2\approx0.4231894254,\\
    r_3&\approx0.4135410939-0.09991760306i,~r_4\approx0.4135410939+0.09991760306i,\\
    r_5&\approx3.614831595-1.802444362i,~r_6\approx3.614831595+1.802444362i.
  \end{align*}
  $x_4\in(0,1)$ so $x_4\in\s{r_1,r_2}$. If $x_4=r_2$ then for every $r'_2\in[0.42,0.43]$, $g_2(r'_2,x_3)=0$ iff $x_3>1$ -- a contradiction to $x_3<1$.
  Therefore $x_4=r_1$.\footnote{See \ref{app:sturm} for how to approximate the roots of $g_1$ using exact arithmetic without relying on the problematic floating point arithmetic.} Using the equalities $g_2=g_3=g_4=0$ we arrive at the approximations
  \begin{align*}
    x_1\approx0.529270752820,&~x_2\approx0.846414728986,\\
    x_3\approx0.523440476515,&~x_4\approx0.320065197645.
  \end{align*}
  
  We now show that $x_1,x_2,x_3,x_4$ are irradical.
  First, for each of the 3 monomial orderings generated lexicographically by $x_2\succ 
  x_3\succ x_4\succ x_1$, $x_3\succ x_4\succ x_1\succ x_2$ and $x_4\succ x_1\succ x_2\succ 
  x_3$, we find corresponding Gr\"obner bases, in addition to the 
  one that we found initially. From them we collect the univariate polynomials
  \begin{align*}
    P_1(y)&=7y^6-42y^5+89y^4-83y^3+40y^2-10y+1,\\
    P_2(y)&=4y^6-27y^5+70y^4-79y^3+45y^2-13y+1,\\
    P_3(y)&=140y^6-511y^5+701y^4-454y^3+141y^2-19y+1,\\
    P_4(y)&=5y^6-44y^5+143y^4-163y^3+85y^2-21y+2.
  \end{align*}
  which we further know satisfy $P_i(x_i)=0$. These exist by the elimination theorem for 
  Gr\"obner bases \citep{cox1997ideals}.

  Now, we use Murty's criterion~\citep{murty2002prime} to show that $P_1,P_2,P_3,P_4$ are 
  all irreducible over $\Q$.
  \begin{enumerate}
    \item $P_1$ is irreducible because $P_1(18)=167595301$ is prime and the $H$ value of 
      $P_1$ is $\frac{89}{7}\approx12.714$.
    \item $P_2$ is irreducible because $P_2(28)=1504207909$ is prime and the $H$ value of 
      $P_2$ is $\frac{79}{4}=19.75$.
    \item $P_3$ is irreducible because $P_3(11)=175397399$ is prime and the $H$ value of 
      $P_3$ is $\frac{701}{140}\approx5.007$.
    \item $P_4$ is irreducible because $P_4(39)=13945135583$ is prime and the $H$ value of 
      $P_4$ is $\frac{163}{5}=32.6$.
  \end{enumerate}
  
  As a result, $P_i$ is the minimal polynomial of $x_i$ (up to normalizing the leading 
  coefficient of $P_i$), so $x_i$ is irrational for every $i$. Now, it can be verified that 
  the Galois group of $P_i$ is $S_6$ for every $i\in\s{1,2,3,4}$ (we used 
  Magma~\citep{MR1484478}). Therefore $x_1,x_2,x_3,x_4$ are all irradical.
  This concludes the proof.
\end{proof}

\section{Five players}\label{sec:5}

Define the $2\times2\times2\times2\times2$ game $G_5$:
\begin{figure}[H]
\centering
\begin{align*}
  \left[\begin{array}{ccccc|ccccc}
    0&0&0&0&0&1&0&0&0&1\\
    0&0&0&0&1&1&1&0&0&0\\
    0&0&0&1&0&1&0&0&1&0\\
    0&0&0&1&1&0&0&1&0&0\\
    0&0&1&0&0&1&0&0&0&2\\
    0&0&1&0&1&0&1&1&0&0\\
    0&0&1&1&0&1&1&0&1&0\\
    0&0&1&1&1&0&1&0&0&1\\
    0&1&0&0&0&1&0&0&0&0\\
    0&1&0&0&1&0&0&1&1&1\\
    0&1&0&1&0&1&1&0&1&1\\
    0&1&0&1&1&0&1&1&0&1\\
    0&1&1&0&0&1&1&1&0&0\\
    0&1&1&0&1&1&0&1&0&1\\
    0&1&1&1&0&1&0&2&1&1\\
    0&1&1&1&1&0&0&1&0&1\\
    1&0&0&0&0&0&0&0&0&1\\
    1&0&0&0&1&0&1&0&1&0\\
    1&0&0&1&0&0&0&0&1&0\\
    1&0&0&1&1&1&0&1&0&0\\
    1&0&1&0&0&0&0&0&0&2\\
    1&0&1&0&1&1&1&1&0&0\\
    1&0&1&1&0&1&1&0&1&0\\
    1&0&1&1&1&0&1&0&0&1\\
    1&1&0&0&0&0&0&0&0&0\\
    1&1&0&0&1&1&0&1&1&1\\
    1&1&0&1&0&0&1&0&1&1\\
    1&1&0&1&1&1&1&1&0&1\\
    1&1&1&0&0&0&1&1&0&0\\
    1&1&1&0&1&1&0&0&1&1\\
    1&1&1&1&0&0&0&1&1&0\\
    1&1&1&1&1&1&0&1&0&1
  \end{array}\right]
\end{align*}
\caption{The game $G_5$}
\label{fig:G5}
\end{figure}

\begin{proposition}\label{prop:n=5}
  $G_5$ has a unique NE $((x_1,1-x_1),(x_2,1-x_2),(x_3,1-x_3),(x_4,1-x_4),(x_5,1-x_5))$ 
  given by
  \begin{align*}
    x_1\approx0.350370422,&~x_2\approx0.646516479,~x_3\approx0.648711818,\\
    x_4\approx0.371748770,&~x_5\approx0.368061687.
  \end{align*}
  The probabilities in the NE are all irrational and irradical.
\end{proposition}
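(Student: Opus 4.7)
The plan is to follow the same template that worked for Proposition~\ref{prop:n=4}, broken into three blocks: elimination of pure and partially mixed equilibria, solution of the fully mixed system via Gr\"obner bases, and algebraic verification that the resulting probabilities are irradical. First I would write out the polynomials $f_1,\ldots,f_5$ directly from the payoff table using Equation~\ref{eq:polys}; each is a multilinear polynomial in four of the variables $x_1,\ldots,x_5$ with integer coefficients in $\{-2,-1,0,1,2,3\}$.

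To show $\PNE(G_5)=\emptyset$ I would build a table with $2^5=32$ rows, one per pure action profile, listing at least one unsatisfied player (i.e., one whose $f_i$ has the wrong sign given the opponents' pure actions). To rule out partially mixed equilibria I would then run through the $10$ cases $x_i\in\{0,1\}$ for $i\in\{1,\ldots,5\}$, each time using Lemma~\ref{lmm:ne-poly-cond} to propagate sign constraints on the remaining $f_j$ until either a contradiction or a pure profile already excluded above is reached. This is the most tedious step, analogous to Cases~1--8 in the proof of Proposition~\ref{prop:n=4} but now with five players; I expect the argument to proceed by first using the simpler $f_i$ (those with the fewest terms) to pin down one more coordinate, then reducing to a sub-case that either contradicts another inequality $f_j\lessgtr0$ or collapses to a pure profile.

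Having reduced to fully mixed NE, every candidate $x\in\NE(G_5)$ satisfies $f_1=\cdots=f_5=0$. I would feed this system to Mathematica and compute the Gr\"obner basis of $\langle f_1,\ldots,f_5\rangle\subseteq\Q[x_1,\ldots,x_5]$ in the lexicographic order $x_1\succ x_2\succ x_3\succ x_4\succ x_5$. By the elimination theorem for Gr\"obner bases~\citep{cox1997ideals} the basis contains a univariate polynomial in $x_5$ of degree (per the paper) $26$, together with polynomials expressing the other $x_i$ in terms of $x_5$. I would approximate the real roots of the univariate polynomial numerically (with the appendix's Sturm-sequence method used to rigorize floating point), discard those outside $(0,1)$, and for each surviving candidate check whether the induced values of $x_1,\ldots,x_4$ obtained via back-substitution lie in $(0,1)$; if the paper's numerical approximations are the only ones passing these filters, the NE is unique.

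For irradicality I would repeat the Gr\"obner basis computation with four additional cyclic lex orderings to extract univariate polynomials $P_1,\ldots,P_5\in\Q[y]$ of degree $26$ with $P_i(x_i)=0$, then use Murty's criterion~\citep{murty2002prime}, i.e., exhibit for each $P_i$ a rational argument where $P_i$ evaluates to a prime integer exceeding the $H$-bound, to conclude irreducibility over $\Q$. Finally I would invoke Magma~\citep{MR1484478} to compute the Galois group of each $P_i$ and verify it is $S_{26}$ (or at least an unsolvable group); since $S_n$ is unsolvable for $n\ge5$, this certifies that no $x_i$ admits a radical expression, so each $x_i$ is irradical and, a fortiori, irrational. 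The main obstacle is the partially mixed case analysis, which is combinatorially larger than for $G_4$ and must be executed carefully to avoid leaving a case open; the computer algebra portion is routine given the tools cited for Proposition~\ref{prop:n=4}, modulo the larger polynomial sizes.
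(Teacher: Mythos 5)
Your overall template --- pure-profile table, elimination of the ten boundary cases, Gr\"obner basis in lex order to reduce to a degree-$26$ univariate polynomial in $x_5$, root isolation and back-substitution to get uniqueness, then irreducibility plus an $S_{26}$ Galois group computation for irradicality --- is exactly the paper's route, and the computer-algebra portion of your plan matches what the paper does (the paper certifies irreducibility of the $P_i$ directly in Mathematica rather than via Murty's criterion, but that is immaterial).

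The gap is in how you propose to close the ten cases $x_i\in\{0,1\}$. You describe propagating sign constraints on the remaining $f_j$ ``until either a contradiction or a pure profile already excluded above is reached,'' but for several of these cases neither outcome occurs. When $x_1=0$ (and likewise $x_1=1$), the induced $4$-player subgame among players $2,\ldots,5$ has a perfectly consistent, \emph{fully mixed} equilibrium with irrational coordinates (approximately $(0.651,0.638,0.403,0.433)$ for $x_1=0$), so no amount of sign propagation among $f_2,\ldots,f_5$ yields a contradiction or a pure profile. The contradiction comes only from player $1$'s own incentive: one must (a) prove the partial NE of the subgame is unique --- which is itself a miniature version of the whole argument, requiring a boundary-case analysis and a Gr\"obner basis for the restricted system $f_2=\cdots=f_5=0$ --- and (b) rigorously determine the sign of $f_1$ at that algebraic point, which the paper does by combining a $12$-digit numerical approximation with a bound $\abs{\nabla f_1}\le1000$ so that the approximation error cannot flip the sign. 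Similarly, in the case $x_2=1$ the subgame has a continuum of partial NEs ($x_1$ free, $x_3=1$, $x_4=0$, $x_5=\tfrac12$), and the contradiction is that $f_2=-1<0$ on all of them. Your plan as stated would stall on these cases; you need to add the ``solve the subgame completely, then evaluate the fixed player's deviation polynomial with certified numerics'' step to make the boundary elimination go through.
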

\begin{proof}
  First observe that $\PNE(G_4)=\emptyset$ using this table of profitable deviations:
  {\footnotesize
  \begin{align*}
    \begin{array}{|c|c|}
      \hline
      \text{action profile}&\text{unsatisfied players}\\
      \hline
      (0,0,0,0,0)&4\\
      (0,0,0,0,1)&3,5\\
      (0,0,0,1,0)&2\\
      (0,0,0,1,1)&1,2\\
      (0,0,1,0,0)&2,4\\
      (0,0,1,0,1)&1,5\\
      (0,0,1,1,0)&5\\
      (0,0,1,1,1)&3\\
      (0,1,0,0,0)&3,4,5\\
      (0,1,0,0,1)&1,2\\
      (0,1,0,1,0)&3\\
      (0,1,0,1,1)&1,4\\
      (0,1,1,0,0)&4,5\\
      (0,1,1,0,1)&2\\
      (0,1,1,1,0)&2\\
      (0,1,1,1,1)&1,2\\
      (1,0,0,0,0)&1,4\\
      (1,0,0,0,1)&1,3,5\\
      (1,0,0,1,0)&1,2\\
      (1,0,0,1,1)&2,4\\
      (1,0,1,0,0)&1,2,4\\
      (1,0,1,0,1)&5\\
      (1,0,1,1,0)&5\\
      (1,0,1,1,1)&3\\
      (1,1,0,0,0)&1,3,4,5\\
      (1,1,0,0,1)&2\\
      (1,1,0,1,0)&1,3\\
      (1,1,0,1,1)&4\\
      (1,1,1,0,0)&1,4,5\\
      (1,1,1,0,1)&2,3\\
      (1,1,1,1,0)&1,2,5\\
      (1,1,1,1,1)&2,4\\
      \hline
    \end{array}
  \end{align*}
  }
  
  Now the polynomials $f_1,f_2,f_3,f_4,f_5$ from Equation~\ref{eq:polys} are
  \begin{align*}
    f_1(x_2,x_3,x_4,x_5)=&-5x_2x_3x_4x_5+4x_2x_3x_4+2x_2x_3x_5-x_2x_3+3x_2x_4x_5\\
    &-2x_2x_4-2x_2x_5+x_2+x_3x_4x_5-x_3x_4-x_4x_5+x_4\\
    &+2x_5-1,\\
    f_2(x_1,x_3,x_4,x_5)=&~x_3x_4x_5+2x_3x_4-2x_3-2x_4x_5+1,\\
    f_3(x_1,x_2,x_4,x_5)=&-2x_1x_2x_4x_5+x_1x_2x_4+x_1x_2x_5+2x_1x_4x_5-x_1x_4-x_1x_5\\
    &+3x_2x_4x_5-3x_2x_4+x_2-x_4x_5+x_4-x_5,\\
    f_4(x_1,x_2,x_3,x_5)=&~2x_1x_2x_3x_5-2x_1x_2x_3-x_1x_2x_5+x_1x_2-x_1x_3x_5+x_1x_3\\
    &+x_1x_5-x_1-x_2x_3x_5+x_2x_3+x_2x_5-x_2-2x_5+1,\\
    f_5(x_1,x_2,x_3,x_4)=&-x_1x_2x_3x_4+x_1x_2x_3+x_1x_2x_4-x_1x_2+x_1x_3x_4-x_1x_3\\
    &-x_1x_4+x_1-x_2x_3x_4+3x_2x_4-x_3x_4+x_3-1.
  \end{align*}

  We show using Lemma~\ref{lmm:ne-poly-cond} that if $x_i\in\s{0,1}$ for some $i$ then 
  $x\notin\NE(G_5)$. The proof proceeds as follows. For each $i$ in turn, we fix $x_i \in \{ 0, 1 \}$ and consider the subgame between the 4 players obtained by restricting player $i$ to play the pure action corresponding to $x_i$. We then perform a case analysis to find all profiles $x_{-i}$ for these 4 players which are NEs in this subgame. The case analysis is similar to the case analysis in the proof of Proposition~\ref{prop:n=4}. We then show that for each such NE profile $x_{-i}$, $(2x_i-1)f_i(x_{-i})<0$, meaning~$i$ is not playing a best response to $x_{-i}$ in the full $5$-player game $G_5$. So for player $i$ to play a best response in any NE of the full $5$-player game player, $i$ will need to choose $x_i\in(0,1)$.
  
  There are 10 cases in the proof (one for each $i\in[5]$ and $x_i\in\s{0,1}$). Here is additional information regarding these 10 cases (here the approximate values
  were computed using Mathematica):

  \paragraph{Case 1: $x_1=0$} Here the unique partial NE for players $2,3,4,5$ is given by
  \begin{align*}
    x_2&\approx0.650518016106,~x_3\approx0.638238319763,\\
    x_4&\approx0.402794248582,~x_5\approx0.433321011106.
  \end{align*}
  And $f_1(x_{-1})\approx0.103778882911>0$. Therefore $x_1\ne0$ (observe that $\abs{\nabla 
  f_1}\le1000$ so the value of $f_1$ at the exact value of $x_{-1}$ will still be $>0$ when 
  using 12-digit approximations).

  \paragraph{Case 2: $x_1=1$} Here the unique partial NE for players $2,3,4,5$ is given by
  \begin{align*}
    x_2&\approx0.589697169563,~x_3\approx0.681923203912,\\
    x_4&\approx0.338247172171,~x_5\approx0.218624870529.
  \end{align*}
  And $f_1(x_{-1})\approx-0.245843038086<0$. Therefore $x_1\in(0,1)$.

  \paragraph{Case 3: $x_2=0$} There is no partial NE where $x_1\in(0,1)$, so $x_2\ne0$.

  \paragraph{Case 4: $x_2=1$} Here all partial NEs for players $1,3,4,5$ are given by
  \begin{align*}
    x_1\in(0,1),~x_3=1,~x_4=0,~x_5=\frac{1}{2}.
  \end{align*}
  And $f_2(x_{-2})=-1<0$. Therefore $x_2\in(0,1)$.

  \paragraph{Case 5: $x_3=0$} There is no partial NE where $x_1,x_2\in(0,1)$, so 
  $x_3\ne0$.

  \paragraph{Case 6: $x_3=1$} There is no partial NE where $x_1,x_2\in(0,1)$, so 
  $x_3\in(0,1)$.

  \paragraph{Case 7: $x_4=0$} There is no partial NE where $x_1,x_2,x_3\in(0,1)$, so 
  $x_4\ne0$.

  \paragraph{Case 8: $x_4=1$} There is no partial NE where $x_1,x_2,x_3\in(0,1)$, so 
  $x_4\in(0,1)$.

  \paragraph{Case 9: $x_5=0$} There is no partial NE where $x_1,x_2,x_3,x_4\in(0,1)$, so 
  $x_5\ne0$.

  \paragraph{Case 10: $x_5=1$} There is no partial NE where $x_1,x_2,x_3,x_4\in(0,1)$, so 
  $x_5\in(0,1)$.

  \bigskip

  Therefore the only NEs are fully mixed. This means that $x\in\NE(G_5)$ satisfies 
  $f_i(x)=0$ for every $i$. We solve this now.

  Let $F=\s{f_1,f_2,f_3,f_4,f_5}$. It is a set of polynomials in the polynomial ring 
  $\Q[x_1,x_2,x_3,x_4,x_5]$. We begin by finding the Gr\"obner basis 
  $G=\s{g_1,g_2,g_3,g_4,g_5}$ of the ideal $\langle F\rangle$, with the monomial ordering 
  generated lexicographically by $x_1\succ x_2\succ x_3\succ x_4\succ x_5$, given in full 
  in \ref{app:grobner-basis} (using Mathematica).

  We now compute the solutions of $\forall i~g_i=0$ to find those of $\forall i~f_i=0$. The proof proceeds as follows. We observe that like in the proof of Proposition~\ref{prop:n=4}, $g_1\in\Q[x_5]$ and for $i\ge2$, $g_i=a_ix_{6-i}+h_i(x_5)$ where $a_i\in\Z$ and $h_i\in\Q[x_5]$. So we first find the roots of $g_1$, using a similar argument to the one in the proof of Proposition~\ref{prop:n=4} and \ref{app:sturm}, and find $5$ roots of $g_5$ in $(0,1)$, meaning $5$ possible values for $x_5$. Then for each of the latter four we show using $g_2,\ldots,g_5$ that one of $x_1,x_2,x_3,x_4$ falls outside $(0,1)$, implying a contradiction. In conclusion, there is a unique possible value for $x_5$ and by the structure of $g_2,\ldots,g_5$ a unique possible value for each of $x_1,x_2,x_3,x_4$.
  This solution is given in the 
  approximated decimal form:
  \begin{align*}
    x_1\approx0.3503704221,&~x_2\approx0.6465164785,~x_3\approx0.6487118183,\\
    x_4\approx0.3717487703,&~x_5\approx0.3680616872.
  \end{align*}

  To see that $x_1,x_2,x_3,x_4,x_5$ are irradical, as before, we find their minimal 
  polynomials (up to scaling):
  \begin{align*}
    P_1(y)=&~576 y^{26}+7360 y^{25}+4496 y^{24}+250816 y^{23}+355564 y^{22}\\
    &-7898280 y^{21}+7658244 y^{20}+75548910 y^{19}-142993432 y^{18}\\
    &-189077220 y^{17}+334832314 y^{16}+474658874 y^{15}+1696776519 y^{14}\\
    &-7562033350 y^{13}+1512397109 y^{12}+14656660866 y^{11}\\
    &-10470778075 y^{10}-3782764895 y^9-665928467 y^8+10050992181 y^7\\
    &-5447529632 y^6-1586217407 y^5+562969676 y^4+1493433257 y^3\\
    &-79516172 y^2-522298360 y+132583232,
  \end{align*}
  \begin{align*}
    P_2(y)=&~2057728 y^{26}-47527232 y^{25}+374368064 y^{24}-482543600 y^{23}\\
    &-11253727536 y^{22}+90191808584 y^{21}-352161035060 y^{20}\\
    &+829456047656 y^{19}-1120735934348 y^{18}+286387243298 y^{17}\\
    &+2393470340090 y^{16}-6252418753985 y^{15}+9238553888534 y^{14}\\
    &-9587641717941 y^{13}+7391393209142 y^{12}-4277517478697 y^{11}\\
    &+1815337752171 y^{10}-515212602860 y^9+59595129244 y^8\\
    &+25514336227 y^7-17322169528 y^6+5510430025 y^5-1139164516 y^4\\
    &+160112123 y^3-14874613 y^2+828146 y-20988,
  \end{align*}
  \begin{align*}
    P_3(y)=&~5828 y^{26}-80590 y^{25}+471147 y^{24}-1473516 y^{23}\\
    &+1995893 y^{22}+3280961 y^{21}-21791522 y^{20}+51425278 y^{19}\\
    &-93080861 y^{18}+203283288 y^{17}-444991348 y^{16}+713613468 y^{15}\\
    &-837466118 y^{14}+925602099 y^{13}-1210417319 y^{12}+1552957912 y^{11}\\
    &-1585613560 y^{10}+1241271492 y^9-772369636 y^8+401799920 y^7\\
    &-180281904 y^6+69151344 y^5-21721184 y^4+5260240 y^3\\
    &-909264 y^2+99200 y-5120,
  \end{align*}
  \begin{align*}
    P_4(y)=&~367535904 y^{26}-4069658144 y^{25}+15863607440 y^{24}+4840925000 y^{23}\\
    &-332049318712 y^{22}+1798491222250 y^{21}-5880537317282 y^{20}\\
    &+13925726681306 y^{19}-25570689763592 y^{18}+37716767814324 y^{17}\\
    &-45641117387319 y^{16}+45922091951685 y^{15}-38746001553780 y^{14}\\
    &+27551639828330 y^{13}-16547091697878 y^{12}+8389403526528 y^{11}\\
    &-3578413101941 y^{10}+1274916123621 y^9-374578292008 y^8\\
    &+88741617902 y^7-16260708632 y^6+2104604912 y^5-141853168 y^4\\
    &-7308320 y^3+2769472 y^2-256128 y+8192,
  \end{align*}
  \begin{align*}
    P_5(y)=&~25772032 y^{26}-399987456 y^{25}+2634374272 y^{24}-8416506944 y^{23}\\
    &+2215910496 y^{22}+110722637568 y^{21}-612619393968 y^{20}\\
    &+2025933659884 y^{19}-4916502391844 y^{18}+9383743371458 y^{17}\\
    &-14566831934444 y^{16}+18743281388994 y^{15}-20217905397986 y^{14}\\
    &+18405783324440 y^{13}-14192403999687 y^{12}+9280251573089 y^{11}\\
    &-5141465321619 y^{10}+2406410117193 y^9-946553896669 y^8\\
    &+310404824295 y^7-83872244335 y^6+18358684229 y^5\\
    &-3175419088 y^4+417903630 y^3-39340244 y^2+2360724 y-67892.
  \end{align*}

  These can all be verified to be irreducible using Mathematica. Then, 
  the Galois groups of each of these 5 polynomials can be verified to be all $S_{26}$ which 
  is unsolvable (using e.g., Magma). Therefore $x_1,x_2,x_3,x_4,x_5$ are 
  all irradical. This concludes the proof.
\end{proof}

\section{Six players and up}\label{sec:ext}

Here we use the constructions of Section~\ref{sec:4} and Section~\ref{sec:5} to construct 
games $G_n$ for every $n\ge6$ with the same property as $G_4$ and $G_5$. This completes the 
picture in terms of the (general worst case) inexpressibility of NEs with radicals.

First, we use the notion of \emph{product of games}:

\begin{definition}
  Given are games $H^1,H^2$ such that $H^j$ has $n_j$ players, the pure action 
  profiles $A^j$ and the payoff functions $u^j_1,\ldots,u^j_{n_j}$. Define the 
  \emph{product game} $H^1\times H^2$ as follows:
  \begin{itemize}
    \item there are $n_1+n_2$ players,
    \item the set of pure action profiles is $A^1\times A^2$,
    \item the payoff functions are $u_i(a^1,a^2)=u^1_i(a^1)$ if $i\in[n_1]$ and 
      $u^2_i(a^2)$ if $i\in\s{n_1+1,\ldots,n_1+n_2}$.
  \end{itemize}
  If $H_1=H_2$, also write $H_1\times H_2=H_1^{\times2}$, and generalize the notation to a 
  product of $q$ copies of $H_1$ by $H_1^{\times q}$.
\end{definition}

This operation is associative, therefore $H_1^{\times q}$ is well defined. Here is a property of $H^1\times H^2$ that follows by definition:

\begin{lemma}\label{lmm:product-ne}
  $\NE(H^1\times H^2)=\NE(H^1)\times\NE(H^2)$.
\end{lemma}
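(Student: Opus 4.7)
The plan is to prove both inclusions by directly unpacking the definitions of the product game and a Nash equilibrium; no serious machinery is needed.

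First I would make explicit the key observation that drives the whole argument: by definition of the product game $H^1\times H^2$, the payoff function $u_i$ of any player $i$ depends only on coordinates coming from that player's own original game. Extending multilinearly to mixed strategy profiles $(\sigma^1,\sigma^2)$ (where $\sigma^j$ is a mixed profile for $H^j$), this means $\E[u_i(\sigma^1,\sigma^2)]=\E[u^1_i(\sigma^1)]$ for $i\in[n_1]$ and $\E[u_i(\sigma^1,\sigma^2)]=\E[u^2_{i-n_1}(\sigma^2)]$ for $i\in\{n_1+1,\ldots,n_1+n_2\}$. In particular, the expected payoff of a player is independent of the mixed strategies chosen in the \emph{other} game.

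For the inclusion $\NE(H^1\times H^2)\subseteq\NE(H^1)\times\NE(H^2)$, I would take $(\sigma^1,\sigma^2)\in\NE(H^1\times H^2)$ and show $\sigma^1\in\NE(H^1)$ (the other half is symmetric). Fix any player $i\in[n_1]$ and any mixed deviation $\tau_i$ for that player in $H^1$. By the observation above, the expected payoffs to $i$ from $\sigma^1$ and from $(\tau_i,\sigma^1_{-i})$ in the game $H^1$ equal their respective expected payoffs in the product game, where the other $H^2$ players are kept at $\sigma^2$. Since $(\sigma^1,\sigma^2)$ is a NE of the product game, the deviation cannot be profitable there, hence not in $H^1$ either.

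For the reverse inclusion $\NE(H^1)\times\NE(H^2)\subseteq\NE(H^1\times H^2)$, I would take $\sigma^1\in\NE(H^1)$ and $\sigma^2\in\NE(H^2)$ and verify the best-response condition for each player in $H^1\times H^2$. For $i\in[n_1]$ and any mixed deviation $\tau_i$, the observation again reduces the comparison of expected payoffs in the product game to the comparison in $H^1$, where $\sigma^1$ is already a best response. The case $i>n_1$ is symmetric. Hence $(\sigma^1,\sigma^2)\in\NE(H^1\times H^2)$.

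The main obstacle is purely notational rather than conceptual: one just has to be careful when writing mixed profiles of the product game as pairs $(\sigma^1,\sigma^2)$ and when taking expectations, to invoke the product structure of the payoff function so that the two components decouple. Because the argument is essentially a one-line consequence of the definitions, I would keep the written proof short and perhaps even leave it as a remark that this follows directly from the definition of $H^1\times H^2$.
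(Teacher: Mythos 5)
Your proof is correct and matches the paper's treatment: the paper simply asserts the lemma "follows by definition," and your argument is precisely the definitional unpacking (payoffs in the product game decouple across the two component games, so best-response conditions separate). Nothing further is needed.
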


Using this lemma, for every $n=4n_1+5n_2$ where $n_1,n_2\in\N$ are not both $0$, the game $G_4^{\times n_1}\times G_5^{\times n_2}$ has a unique NE and its probabilities are all irradical (see Proposition~\ref{prop:n>=6} below). However, these constructions do not cover the cases 
$n=6,7,11$. To fill these holes, we now show how to add 2 players to each of  $G_4$ and 
$G_5$ while maintaining their desired properties (a unique NE, whose probabilities are 
irradical).

Define the $2\times2\times2$ game $H_3$:
\begin{figure}[H]
\centering
\begin{align*}
  \left[\begin{array}{ccc|ccc}
    0&0&0&0&1&1\\
    0&0&1&0&1&1\\
    0&1&0&0&1&0\\
    0&1&1&0&0&1\\
    1&0&0&0&0&1\\
    1&0&1&0&0&0\\
    1&1&0&0&1&0\\
    1&1&1&0&0&0
  \end{array}\right]
\end{align*}
\caption{The game $H_3$}
\label{fig:H3}
\end{figure}

Here is the main property of interest of this game:

\begin{proposition}\label{prop:H3-NE}
  Let $x=((x_1,1-x_1),(x_2,1-x_2),(x_3,1-x_3))$ such that $x_1\in(0,1)$. Then $x\in\NE(H_3)$ iff $x_1=x_2=x_3$.
\end{proposition}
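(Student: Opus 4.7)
The plan is to reduce the equilibrium condition to a small system of polynomial conditions via Lemma~\ref{lmm:ne-poly-cond}, then solve it directly. The first step is to compute the advantage polynomials $f_1, f_2, f_3$ from the payoff table of $H_3$. By inspection, player~$1$'s payoff entry is identically zero over all $8$ pure profiles, so $f_1 \equiv 0$ and player~$1$ is indifferent for every mixed profile of the others. A routine calculation of the other two polynomials should yield the simple affine expressions $f_2(x_1, x_3) = x_1 - x_3$ and $f_3(x_1, x_2) = x_2 - x_1$, meaning all of the binding constraints on a Nash equilibrium come from players $2$ and $3$.

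For the forward direction, assume $x \in \NE(H_3)$ with $x_1 \in (0,1)$. The strategy is to show $x_2, x_3 \in (0,1)$, since then Lemma~\ref{lmm:ne-poly-cond} gives $f_2 = 0$ and $f_3 = 0$, which immediately force $x_1 = x_3$ and $x_2 = x_1$, i.e., $x_1 = x_2 = x_3$. To rule out boundary values I would run a short case analysis. For instance, if $x_2 = 0$ then $f_3 = -x_1 < 0$ forces $x_3 = 0$, but then $f_2 = x_1 > 0$ forces $x_2 = 1$, a contradiction; the cases $x_2 = 1$, $x_3 = 0$, $x_3 = 1$ are dispatched by the analogous two-step chains.

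For the converse direction, suppose $x_1 = x_2 = x_3$. Then $f_1 \equiv 0$, $f_2 = 0$, $f_3 = 0$, so every player is indifferent between their two actions, and $x$ is trivially a Nash equilibrium (Lemma~\ref{lmm:ne-poly-cond}'s three implications are vacuously satisfied, and indifference means any mixing is a best response).

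The ``main obstacle'' is really only the bookkeeping of computing $f_2$ and $f_3$ from the $8$-row payoff table; once they collapse to the linear forms $x_1 - x_3$ and $x_2 - x_1$, the rest is essentially automatic, which reflects the design of $H_3$ as a ``mimicking gadget'' whose only purpose (when grafted onto a larger game in Section~\ref{sec:ext}) is to force two extra players to copy a target mixed probability $x_1$ coming from outside.
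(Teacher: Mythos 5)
Your proposal is correct and follows essentially the same route as the paper: compute $f_1\equiv 0$, $f_2=x_1-x_3$, $f_3=x_2-x_1$, rule out boundary values for players $2$ and $3$ by the same short contradiction chains, and conclude $x_1=x_2=x_3$ from $f_2=f_3=0$ (the converse is immediate from indifference in both treatments). The only cosmetic difference is that the paper skips the explicit $x_3\in\{0,1\}$ cases by deducing $x_3=x_1\in(0,1)$ directly from $f_2=0$ once $x_2\in(0,1)$ is established.
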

\begin{proof}
  First, the polynomials $f_1,f_2,f_3$ from Equation~\ref{eq:polys} of $H_3$ are
  \begin{align*}
    f_1(x_2,x_3)=0,~f_2(x_1,x_3)=x_1-x_3,~f_3(x_1,x_2)=x_2-x_1.
  \end{align*}

  By definition, for every $x_1\in(0,1)$ setting $x_1=x_2=x_3$ means that $f_1(x)=f_2(x)=f_3(x)=0$, meaning that $x\in\NE(H_3)$. Now we show that if $x_1\in(0,1)$ and $x\in\NE(H_3)$ then $x_1=x_2=x_3$.

  We use Lemma~\ref{lmm:ne-poly-cond}. $f_1=0$ so $x_1$ can take on 
  any value, regardless of the values of $x_2,x_3$. Assume that $x_1\in(0,1)$.
  Suppose 
  for contradiction that $x_2=0$. Then $f_3=-x_1<0$, so $x_3=0$ so $f_2=x_1>0$ so $x_2=1$ 
  -- a contradiction. Similarly, if $x_2=1$, then $f_3=1-x_1>0$ so $x_3=1$ so $f_2=x_1-1<0$ 
  so $x_2=0$ -- again a contradiction. Therefore $x_2\in(0,1)$ so $f_2=0$ so $x_3=x_1$ so $f_3=0$ so $x_1=x_2=x_3$.
\end{proof}

Next we define a variation of the product game with a small overlap in the players.  

\begin{definition}
  Let $G$ be a game such that
  \begin{itemize}
    \item there are $n$ players,
    \item each player has $2$ pure actions,
    \item the payoffs are given by functions $u_1,\ldots,u_n$,
    \item there is a unique NE, and this NE is fully mixed.
  \end{itemize}
  Define the game $G\circ H_3$ as follows:
  \begin{itemize}
    \item there are $n+2$ players,
    \item each player has $2$ pure actions,
    \item the payoff functions are given by $u_i(a,a_{n+1},a_{n+2})=u_i(a)$ for $i\in[n]$, 
      and for $i\in\s{n+1,n+2}$ the payoff is defined as 
      $u_i(a,a_{n+1},a_{n+2})=u^{H_3}_{i-(n-1)}(a_n,a_{n+1},a_{n+2})$ (as defined in 
      Figure~\ref{fig:H3}).
  \end{itemize}
\end{definition}

The most important thing about this operation $\circ H_3$ is that it is comptaible with 
$G_4$ and $G_5$. The key property of $G\circ H_3$ is as follows:

\begin{lemma}\label{lmm:circ-ne}
  If $\NE(G)=\s{((x_1,1-x_1),\ldots,(x_n,1-x_n))}$ and $x_n\in(0,1)$ then
  \begin{align*}
    \NE(G\circ H_3)=\s{((x_1,1-x_1),\ldots,(x_n,1-x_n),(x_n,1-x_n),(x_n,1-x_n))}.
  \end{align*}
\end{lemma}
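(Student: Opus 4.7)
The plan is to verify existence and uniqueness separately, in both cases exploiting that the payoffs in $G\circ H_3$ decompose cleanly: players $1,\ldots,n$ only see $G$'s payoffs, and players $n+1,n+2$ only see $H_3$'s payoffs, so that player $n$'s action is the only overlap between the two components.

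For existence, I would check that the profile
\[
  y=((x_1,1-x_1),\ldots,(x_n,1-x_n),(x_n,1-x_n),(x_n,1-x_n))
\]
is a Nash equilibrium of $G\circ H_3$. Players $1,\ldots,n$ are best-responding because their payoff functions do not depend on $a_{n+1},a_{n+2}$, so the hypothesis that $(x_1,\ldots,x_n)$ is a NE of $G$ translates directly into a best-response condition inside $G\circ H_3$. For players $n+1$ and $n+2$, the payoffs are exactly the $H_3$ payoffs evaluated on the three mixed actions $(x_n,x_n,x_n)$; since $x_n\in(0,1)$, Proposition~\ref{prop:H3-NE} gives that this triple lies in $\NE(H_3)$, and in particular players $2$ and $3$ of $H_3$ (i.e., players $n+1$ and $n+2$ of $G\circ H_3$) are best-responding.

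For uniqueness, let $y\in\NE(G\circ H_3)$. Since the payoffs of players $1,\ldots,n$ depend only on $(a_1,\ldots,a_n)$, the projection $(y_1,\ldots,y_n)$ must be a NE of $G$, and by the uniqueness hypothesis $y_i=x_i$ for every $i\le n$; in particular $y_n=x_n\in(0,1)$. Now I consider the three-player subgame on players $n,n+1,n+2$ with $H_3$ payoffs. Player $n$'s payoffs in $G\circ H_3$ do \emph{not} include any $H_3$ contribution, but this is harmless because player $1$ of $H_3$ has $f_1\equiv0$, so any mixed action is trivially a best response in the $H_3$ sub-game; players $n+1$ and $n+2$ are best-responding in $G\circ H_3$, which coincides with best response in $H_3$ since their $G\circ H_3$ payoffs equal their $H_3$ payoffs. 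Therefore $(y_n,y_{n+1},y_{n+2})\in\NE(H_3)$, and since $y_n=x_n\in(0,1)$, Proposition~\ref{prop:H3-NE} forces $y_{n+1}=y_{n+2}=x_n$.

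The main conceptual step, used in both directions, is the observation that $f_1$ of $H_3$ vanishes identically: this is exactly what makes attaching two new $H_3$-players to player $n$ impose no extra constraint on player $n$ while simultaneously forcing the two new players to match player $n$'s mixed strategy. I do not foresee any substantive obstacle beyond keeping the player-indexing between $G$, $H_3$, and $G\circ H_3$ straight.
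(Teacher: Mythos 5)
Your proof is correct and follows essentially the same route as the paper's: project a NE of $G\circ H_3$ onto the first $n$ players to pin down $y_n=x_n$, observe that players $n,n+1,n+2$ form a copy of $H_3$ in which player $n$'s payoff is unconstrained ($f_1\equiv0$), and invoke Proposition~\ref{prop:H3-NE}. The paper's own proof is just a terser version of this argument (it leaves the existence direction implicit), so no further comparison is needed.
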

\begin{proof}
  By definition. Observe that a NE of $G\circ H_3$, restricted to players $n,n+1,n+2$, is a 
  NE to $H_3$, because the payoff of the first player in $H_3$ (meaning player $n$ in $G\circ H_3$) is constant. Since player $n$ plays $(x_n,1-x_n)$ in the only NE of $G$, player $n$ will need to play it in any NE of $G\circ H_3$. Therefore we 
  can invoke Proposition~\ref{prop:H3-NE} here to get the desired result.
\end{proof}

Now the games $G_n$ for $n\ge6$ follow. Let $n\ge6$. Write $n=4q+r$ where $1\le q\in\N$ and 
$r\in\s{0,1,2,3}$. Define
\begin{align}
  G_n=\begin{cases}
    G_4^{\times q}&r=0,\\
    G_4^{\times q-1}\times G_5&r=1,\\
    G_4^{\times q-1}\times(G_4\circ H_3)&r=2,\\
    G_4^{\times q-1}\times(G_5\circ H_3)&r=3.
  \end{cases}
\end{align}

\begin{proposition}\label{prop:n>=6}
  For every $n\ge6$, the game $G_n$ has a unique NE, and the probabilities in this NE are 
  all irradical.
\end{proposition}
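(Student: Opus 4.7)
The plan is to proceed by cases on $r \in \{0, 1, 2, 3\}$ where $n = 4q + r$, assembling the unique NE of $G_n$ from the unique NEs of the building blocks $G_4$ and $G_5$, together with the $\circ H_3$ gadget where needed. The structural ingredients are exactly Lemma~\ref{lmm:product-ne} (iterated freely, since the product is associative) and Lemma~\ref{lmm:circ-ne}. Throughout, irradicality of the probabilities in the NE of $G_n$ will follow directly from irradicality of the probabilities in the NEs of $G_4$ and $G_5$, since the $\circ H_3$ step only introduces duplicate copies of an already-irradical coordinate.

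First I would handle the pure-product cases $r = 0$ and $r = 1$. For $r = 0$, iterating Lemma~\ref{lmm:product-ne} gives $\NE(G_4^{\times q}) = \NE(G_4)^{q}$, and Proposition~\ref{prop:n=4} says this set has a single element whose eight probability weights are all irradical; hence the unique NE of $G_n$ has only irradical weights. The case $r = 1$ is identical, this time combining $\NE(G_4)^{q-1}$ (irradical by Proposition~\ref{prop:n=4}) with $\NE(G_5)$ (irradical by Proposition~\ref{prop:n=5}).

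For $r = 2$ and $r = 3$ I would first apply Lemma~\ref{lmm:circ-ne} to the rightmost factor. Proposition~\ref{prop:n=4} guarantees a unique, fully mixed NE of $G_4$, so in particular the fourth coordinate $x_4$ lies in $(0, 1)$, which is exactly the hypothesis of Lemma~\ref{lmm:circ-ne}; thus $G_4 \circ H_3$ has a unique NE whose six mixing probabilities are $x_1, x_2, x_3, x_4, x_4, x_4$, all irradical. The analogous statement for $G_5 \circ H_3$ uses $x_5 \in (0, 1)$ from Proposition~\ref{prop:n=5}, giving a unique NE with irradical probabilities $x_1, x_2, x_3, x_4, x_5, x_5, x_5$. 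Taking the product with $G_4^{\times q-1}$ and applying Lemma~\ref{lmm:product-ne} again then yields the unique NE of $G_n$, with every probability irradical.

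There is essentially no obstacle beyond bookkeeping: one must verify the player count in each subcase ($4q$, $4q+1$, $4q+2$, $4q+3$ respectively), interpret $G_4^{\times 0}$ as the empty factor when $r \in \{2,3\}$ and $q = 1$ (so that $G_6 = G_4 \circ H_3$ and $G_7 = G_5 \circ H_3$), and note that the hypothesis $x_n \in (0,1)$ of Lemma~\ref{lmm:circ-ne} is available because the NEs of $G_4$ and $G_5$ are explicitly fully mixed.
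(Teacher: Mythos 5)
Your proof is correct and follows essentially the same route as the paper's: apply Lemma~\ref{lmm:circ-ne} (with the fully-mixed hypothesis from Propositions~\ref{prop:n=4} and~\ref{prop:n=5}) to get unique irradical NEs for $G_4\circ H_3$ and $G_5\circ H_3$, then combine factors via Lemma~\ref{lmm:product-ne}. Your explicit note that $G_4^{\times 0}$ is the empty factor when $q=1$ and $r\in\{2,3\}$ is a small bookkeeping point the paper leaves implicit, but the argument is the same.
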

\begin{proof}
  Let $n=4q+r$. By Proposition~\ref{prop:n=4}, Proposition~\ref{prop:n=5} and 
  Lemma~\ref{lmm:circ-ne} we obtain that $G_4\circ H_3$ and $G_5\circ H_3$ both have a 
  unique NE, and that NE is irradical in both games. Now, for any $n\ge6$, $G_n$ is a 
  product of games which all have a unique NE, and that NE is irradical in all those games. Therefore each $G_n$ has a unique NE, and its probabilities are all irradical.
\end{proof}

\section*{Acknowledgements}
We thank Teo Collins for the reference to Gr\"obner bases.

\appendix

\section{Further details in the proof of Proposition~\ref{prop:n=4}}\label{app:sturm}

In the proof of Proposition~\ref{prop:n=4} we use the approximations of the roots of $g_1$ to determine the unique value of $x_4$. To circumvent possible floating point issues, we can use the Sturm sequence of $g_1$ to determine using exact arithmetic that (1) there are exactly 2 real roots, and (2) one is in the interval $(0.3,0.4]$ and one in the interval $(0.4,0.5]$ (\cite{sturm1835memoire}). The Sturm sequence of $g_1$ can be found using Mathematica:
\begin{align*}
    p_0=&g_1=5x_4^6-44x_4^5+143x_4^4-163x_4^3+85x_4^2-21x_4+2,\\
    p_1=&g_1'=30x_4^5-220x_4^4+572x_4^3-489x_4^2+170x_4-21,\\
    p_2=&-\remn(p_0,p_1)=\frac{55}{9}x_4^4-\frac{5249}{90}x_4^3+\frac{943}{15}x_4^2-\frac{433}{18}x_4+\frac{47}{15},\\
    p_3=&-\remn(p_1,p_2)=-\frac{27110403}{30250}x_4^3+\frac{15927363}{15125}x_4^2-\frac{2514591}{6050}x_4\\
    &+\frac{831852}{15125},\\
    p_4=&-\remn(p_2,p_3)=\frac{4813052861375}{81663772313601}x_4^2-\frac{1017283844500}{27221257437867}x_4\\
    &+\frac{417696219875}{81663772313601},\\
    p_5=&-\remn(p_3,p_4)=\frac{46666740312733677523326}{1531601841083641320125}x_4\\
    &-\frac{19799411241381912287163}{1531601841083641320125},\\
    p_6=&-\remn(p_4,p_5)=\frac{3506083202136869448018665125}{26667695965024814331677001308676}.
\end{align*}
where $\remn(f,g)$ is the remainder of the polynomial division of $f$ by $g$.

For every $a\in\R\cup\s{-\infty,\infty}$, let $V(a)$ be the number of sign changes in the sequence $(p_0,p_1,p_2,p_3,p_4,p_5,p_6)$ (and the sign of something like $p_0(\infty)$ is defined to be $+$). Sturm's theorem states that for every $-\infty\le a<b\le\infty$, the number of real roots in $(a,b]$ is equal to $V(a)-V(b)$.

So we have $V(-\infty)-V(\infty)=4-2=2$, so there are $2$ real roots. Note that
\begin{align*}
    p_0\p{\frac{3}{10}}&=\frac{161}{40000},\\
    p_1\p{\frac{3}{10}}&=-\frac{2751}{10000},\\
    p_2\p{\frac{3}{10}}&=\frac{371}{7500},\\
    p_3\p{\frac{3}{10}}&=\frac{26761959}{30250000},\\
    p_4\p{\frac{3}{10}}&=-\frac{258737930605}{326655089254404},\\
    p_5\p{\frac{3}{10}}&=-\frac{28996945737809045150826}{7658009205418206600625},\\
    p_6\p{\frac{3}{10}}&=\frac{3506083202136869448018665125}{26667695965024814331677001308676},
\end{align*}
and
\begin{align*}
    p_0\p{\frac{4}{10}}&=-\frac{4}{3125},\\
    p_1\p{\frac{4}{10}}&=\frac{27}{625},\\
    p_2\p{\frac{4}{10}}&=-\frac{4}{625},\\
    p_3\p{\frac{4}{10}}&=-\frac{236727}{1890625},\\
    p_4\p{\frac{4}{10}}&=-\frac{32955935705}{81663772313601},\\
    p_5\p{\frac{4}{10}}&=-\frac{5663575581442206389163}{7658009205418206600625},\\
    p_6\p{\frac{4}{10}}&=\frac{3506083202136869448018665125}{26667695965024814331677001308676},
\end{align*}
and
\begin{align*}
    p_0\p{\frac{5}{10}}&=\frac{1}{64},\\
    p_1\p{\frac{5}{10}}&=\frac{7}{16},\\
    p_2\p{\frac{5}{10}}&=-\frac{31}{360},\\
    p_3\p{\frac{5}{10}}&=-\frac{383139}{242000},\\
    p_4\p{\frac{5}{10}}&=\frac{380134673875}{326655089254404},\\
    p_5\p{\frac{5}{10}}&=\frac{28271671319879411796}{12252814728669130561},\\
    p_6\p{\frac{5}{10}}&=\frac{3506083202136869448018665125}{26667695965024814331677001308676}.
\end{align*}
So $V(0.3)-V(0.4)=4-3=1$ and $V(0.4)-V(0.5)=3-2=1$, so $(0.3,0.4]$ and $(0.4,0.5]$ each contain exactly one of the real roots. More specifically, $r_1\in(0.3,0.4]$ and $r_2\in(0.4,0.5]$.

From here the argument in the proof of Proposition~\ref{prop:n=4} that $x_4$ cannot be $r_2$ holds and the rest of the proof goes through.

As for $r_1$, the method above can be further used to verify that $r_1\in(0.320065197644,0.320065197645]$ using exact arithmetic, justifying the approximation for $x_4=r_1$ given in the proposition statement.

\section{The Gr\"obner basis in Proposition~\ref{prop:n=5}}\label{app:grobner-basis}

Here is the Gr\"obner basis $G$ used in Proposition~\ref{prop:n=5}. It is 
$\s{g_1,g_2,g_3,g_4,g_5}$ given as follows:
\begin{align*}
  g_1=&~25772032x_5^{26}-399987456x_5^{25}+2634374272x_5^{24}-8416506944x_5^{23}\\
  &+2215910496x_5^{22}+110722637568x_5^{21}-612619393968x_5^{20}\\
  &+2025933659884x_5^{19}-4916502391844x_5^{18}+9383743371458x_5^{17}\\
  &-14566831934444x_5^{16}+18743281388994x_5^{15}-20217905397986x_5^{14}\\
  &+18405783324440x_5^{13}-14192403999687x_5^{12}+9280251573089x_5^{11}\\
  &-5141465321619x_5^{10}+2406410117193x_5^9-946553896669x_5^8\\
  &+310404824295x_5^7-83872244335x_5^6+18358684229x_5^5-3175419088x_5^4\\
  &+417903630x_5^3-39340244x_5^2+2360724x_5-67892,
\end{align*}

{\tiny
\begin{align*}
  g_2=&-5229757014096166506106706852927054776656409436526688311934217666969182934601238598093813122993784633757481615360x_5^{25}\\
  &+77417348381807438139298712751795524837772256296658899465798071561263108113263168115187337389190007008413770618112x_5^{24}\\
  &-479003561865428975555899714436006958766026300658008374819291240222777323612217905003850065127110597318088478641792x_5^{23}\\
  &+1363516171631648731401937260410517759345140344024728313437486577525612859813544026537686042034050828432977603879232x_5^{22}\\
  &+533587922587820729708044335466442589479278858927142642702831828582442863250633819428985449118623257898991410037152x_5^{21}\\
  &-22100147438809308088861819847452779162469225691207064470932200568612066220139113476396459639356379722280870005175520x_5^{20}\\
  &+108454900964535452643280673745481810479068278931480402551733219020654394658331661285136003467695044177290332008077552x_5^{19}\\
  &-333073782919957459027278804921869980726997241882391079940534925956228969796725710815019469532991976447383544674526084x_5^{18}\\
  &+757624233673338301359440552712904560489502717271112905470949925790532905236334410472737236149940199890686132929621704x_5^{17}\\
  &-1357384404662525914311341693592403279296839308873698777339142031034705771862285501894438660803278932926999068794546582x_5^{16}\\
  &+1974990208789200599466586861637319326312916081007911794448182406682496223056735984172167459256026946263774303269130502x_5^{15}\\
  &-2374297193220515593964694404791585125075717016662628050023610781028665713690408237409030760913488268277385743966777032x_5^{14}\\
  &+2382354902550652961280745151068702383273085454800813608807850901474781766931357892102898412048158076148480110682671270x_5^{13}\\
  &-2006547338020399904478992449241580128008830980255928702874144615524529032076799032362502252260795483084720768354418906x_5^{12}\\
  &+1422303712594438667631688360360711116173856475208264246894179873753516371223301768725492398493859422183929726651437959x_5^{11}\\
  &-848591809572807446526737307028116087664196830302677461687679329565088044403583949846081471077208029772845733783978556x_5^{10}\\
  &+425256712357840429041584855108259194843259926786949852822199851920502756754006632376292656084188179152265324768387541x_5^9\\
  &-178190185696798219573888134723758197101657622267716428247687462654805037079392982664482126949337362000059358903070826x_5^8\\
  &+61968796201222942179577897911545319747966347541812098462585270467890006274705019861192475346963305894981804268851065x_5^7\\
  &-17686914368321420604395590031619931467353119221112708310058990785917431994157571306893482096694260412715975874006756x_5^6\\
  &+4075259952587509321223128820658917157429499391726104350291091494649891620567629499432147402492820100061680066097185x_5^5\\
  &-739721454661762561829150779036766386028574319580498394474729242346540997892430899746745978354273573458080278629282x_5^4\\
  &+101889556627033780067706963405180770731046840551112096527796778096348672520153243935935166739545857450254458665078x_5^3\\
  &-10014961581059325297534083799423222232193173182415278003356345107715054775890753680315781533615428759531894840428x_5^2\\
  &+626198151376926556516994281746232986281114459167528107589000512836306110760157779105130081244956865692258077872x_5\\
  &+14562790811532232105030294780908703577127741453867317051625330584746436330267698312201792323393195475584x_4\\
  &-18730395737658453954483839445489626955207312148851486189022226778682417235942672754625522902251804284010201836,
\end{align*}
}

{\tiny
\begin{align*}
  g_3=&~2334544751155071882026722780061349241642599026068676574946240154619529918561800486748917602409834307249946624x_5^{25}\\
  &-34826386272732604074820628503893799177194447217375096775483672338860118986103889174063050911927141610810379520x_5^{24}\\
  &+217633964950393711549186629464978437165300442561564386772405822217081697486573874407123129145535922950706091648x_5^{23}\\
  &-631004665463527900669117543672267152557837342795954164588602147543470354961220923336218777672617507047396540736x_5^{22}\\
  &-181187120952388236998386793367411756303633247262581447028534281445658265062871643279630089023039796374318343072x_5^{21}\\
  &+9925416763570921108480922821290611359261790440483739235088028728524852578911877914757663730293796516424536982112x_5^{20}\\
  &-49511025362638119989000055444640688180713101494312128922185056130200524969006756798221621831744228717747164071536x_5^{19}\\
  &+153606450244873998637400182694657307098185015597200929534563721055218708062776648371930100226217096327071566507812x_5^{18}\\
  &-352426714217615075454859339166993063141216123999988812644511442641474817930073511713523770604296026296813438396088x_5^{17}\\
  &+636541832234522264020461181180548709485760714008828971077784400214808511794157519019430471954449244547200377722358x_5^{16}\\
  &-933498979637348381777015122675301004004870810348162357840262569632788867775411763510678948565334319576360783673102x_5^{15}\\
  &+1131075337234367536418256190781035607571454240605206256230753284711414951414891505602860229050574408650155226782728x_5^{14}\\
  &-1143888525546125381293381264403258806739680192797154817504244626489887282710627211132657772617738453443292075022774x_5^{13}\\
  &+971114741160926999426755391516911794308562783195876583953124144282681326795227451967634738501132063379159602305106x_5^{12}\\
  &-693868399603588601921565117253658200432981408294055991494009684604764816981147591464337021509223488254637528974415x_5^{11}\\
  &+417307080155928782025068977202423092573461396691977026044607405101837819435313315500107795471975427014871318266976x_5^{10}\\
  &-210801786743647026314962428699955929372925777966515764581497272997599561283615798792354612095177686041986524450961x_5^9\\
  &+89033363860194520670813956986147529620606515277585298169282801928853396835629522791071359703663275943842956580202x_5^8\\
  &-31207237132427508982273789207629895352893344450292521048842395054212958515460240818669350564989492535868999725081x_5^7\\
  &+8976459942331587279270618598121344735704429300848888815877070229254977551136189717073239186833151007865116037992x_5^6\\
  &-2084149071494599066636840904363798736708176439312551310459408028002687362733692900701646271408658267467743099101x_5^5\\
  &+381158308756241245183778096001699736234731027675341960830001210231765721434485154435182600335913406210638306866x_5^4\\
  &-52889407110880592240935035564652467638860334385675951440377263860227391443955038430565826220210872540006884446x_5^3\\
  &+5236307109205295690183209426599989107958678066276665974271201847161057962269146837250130645744715913371810412x_5^2\\
  &-329731014995559194481139244161781866865235013681345347340507275899274691789786673963774969773751450999559968x_5\\
  &+3694744112928236745354448222098526709270851631110901155469041811943454903727724876142148905734577408x_3\\
  &+9931369572758577248575536574765581760354618753802907709675771557134507504523669268186354299397125834523196,
\end{align*}
}

{\tiny
\begin{align*}
  g_4=&~-3558500754024337082770180156322829498597886283821905562909626877933627437862004077143293724200219009927888896x_5^{25}\\
  &+53459533959080117233238158244911856852132196679058294422752524765582301160471501982607167430974092115094435584x_5^{24}\\
  &-337159203512668656410477904168737985969382049375082645168114655955720623507549883202440822289987871942813580672x_5^{23}\\
  &+994379773429930614855305353972967341994717800292461875312328526647809247215078010238428657356468774616503704256x_5^{22}\\
  &+189245805202958885164717457856217010089940221854661319802127336154003800666496337272960302229722564154334014816x_5^{21}\\
  &-15197125154505736973813689155185565411338769583659301109177897329934627331339207419412635177208389035828827853280x_5^{20}\\
  &+77035200685565232608193856879005729648408468927829648172264573836344951657512635824410473666817586520056592466384x_5^{19}\\
  &-241401429737158706348114011840171158855788925159564563866972792216940735637246580890337940955337976062511379408780x_5^{18}\\
  &+558630449888944438077465724771239908940298471522896091656782046424166365437557890923179651315268131895457834737840x_5^{17}\\
  &-1017224595597228136804728426231636929621054017646148023601819708031196666257616755854341239641312660582786669412250x_5^{16}\\
  &+1503817045296227892476950853512837221625294649034200024746679125649511207123301404686450154802628965251836163156638x_5^{15}\\
  &-1836897428293610264177371346711383956146293025290376833853869340646192152519382428015203859429983377866327285534440x_5^{14}\\
  &+1872995781108315067497312460676257927626533427837197474160535503357940379854809628626707176037342237011018857587474x_5^{13}\\
  &-1603373170906444757398030676224373499811872254724020941343536500445962959353227034242254754665909779095643373418578x_5^{12}\\
  &+1155288040878502319572608128049976975067745119574642103393020077529068588347437379307725141482112392993781461014925x_5^{11}\\
  &-700697376259091308302109508302940157504814446394086929947963510799016007546849283458571446545182585269186009462998x_5^{10}\\
  &+356934874693265870665365221583700170939530876165407760810376569278622878215127192877970425179784984014582661641105x_5^9\\
  &-152001087875514448869480222043342566350774439463857717641593193787967622803143243410107426653140398261285409508742x_5^8\\
  &+53706296084516238933941939388111267070100298333608164123229065510826751235963393953259745031831145831662722511511x_5^7\\
  &-15566872495760296770636202098010682330153678839979729723749724038484459853264945820724345917034624145600833219006x_5^6\\
  &+3640403667504735955909101560636890647144799421188362661421370193637130168818945991278667273963692398047273666453x_5^5\\
  &-670173581881674007325938393635716663699469008474575049226005766285927286886676689504925496451118751189529451670x_5^4\\
  &+93535104453294586615470233339118267735846441345555250692303574178503068296553438102544432800191989801053615838x_5^3\\
  &-9305270756368072254559737977820811715941760923436884753321555175707328683305744660190066221202432676682105108x_5^2\\
  &+588069610334459746429373597750058332341244517659467031174397469825967982566556498337804906924843284145942504x_5\\
  &+5095805281351607233898259232880569267178981710856112472144845244406570095429556031988777094079363968x_2\\
  &-17749305818938603275929612340121722447988753251997841330848673207984991309809988341490716053109866637501340,
\end{align*}
}

{\tiny
\begin{align*}
  g_5=&~9270243108944911855913423923940118860724549686336022978726911310961722458203883119715431043049984039600128x_5^{25}\\
  &-138175875266337145223434566252355090972050168617333616931556655001369528702606497111283046584897048123496192x_5^{24}\\
  &+862492337870286319208107141713916178120914949268939539624311844622498318270220300752453215856315985897987456x_5^{23}\\
  &-2495102461650519533506960742919876844791681341902817276675711486882107127757546342958357009723667467045595840x_5^{22}\\
  &-749636554146947543936421259610110568924854870730577569236289936020650163568330306595235407217786364530222432x_5^{21}\\
  &+39402580158298384618280673190243773364515744583137648398647346524048246636434435392106680856462315406779353056x_5^{20}\\
  &-196122144147700348424157453329717858016119647923907735554534428793476055982854327245096135183281374007838006224x_5^{19}\\
  &+607567407220738066950645872823562629356481100952338581837849090942575790518184039147606780264543360404788805388x_5^{18}\\
  &-1392034559331843275751880128686367355159492660859513756685269123763970049996344394913220744816481337236164688752x_5^{17}\\
  &+2510558512785233508346113941902852818446593588622592969503358098040942372525809496185640497523813014968232095706x_5^{16}\\
  &-3675774626200611204927149730882154016887117306216041250003754929446407301397549595440548848847486181249067607390x_5^{15}\\
  &+4445471379414986634335124565974011524961495924338229912795322982376585158552336715200597575424938671909289449640x_5^{14}\\
  &-4486118641224030633093582292243068588732437612076418746519037181965028002339068594596667317742675531315640964002x_5^{13}\\
  &+3798874308555711598482873640101602165951250009581540519626436478636365965265691902241294062878777541961423998546x_5^{12}\\
  &-2706192488036664027366752238840352044709794830418822829301155231922766756880774252399289499491242862301112642493x_5^{11}\\
  &+1621786402882167377760236931048484324618716111562222154462408173075352980667480821275949711332943322996801085942x_5^{10}\\
  &-815797787394998843496442733011612610801250048164205077534210923183660530362409166766004966247838070909131717313x_5^9\\
  &+342842754795609541453395389548504455515865006985724325769495898905537672061053294325586795901382310703074484590x_5^8\\
  &-119464200488048451770969329661905353520359137813236139439207882676912887567274947822290425167328332819674138255x_5^7\\
  &+34124598338008014536089722419271075905275461000608984295483205503028442881007977804213586157616567502762395862x_5^6\\
  &-7858371647506645013235958461172093630568069977126879574130925982176847618427475770739613780276848717890340861x_5^5\\
  &+1423375266711743719442643469712816450858096992327301545160933803057177533347744363470751224596704554150864598x_5^4\\
  &-195278103772230475236462201852070232583953616027553465198362842948766419805745572080063898831820910760883150x_5^3\\
  &+19077229110302732393589918306715874494051582708585559469272326336390718508460877526710123484137952808852276x_5^2\\
  &-1182613115373742448605260562696400057836772355234957546845962252089854311499956589035815582233191712586248x_5\\
  &+22822715865720062256822260809119245287191259217569053122320779895326763303740236022841002163082496x_1\\
  &+34970112057581607639480208646156766876639044746724958622577550768034487124422136034942339429080145084988.
\end{align*}
}

\bibliographystyle{elsarticle-harv} 
\bibliography{GEB-submission/edan}

\end{document}
